\newcommand{\comment}[1]{}
\newcommand{\argmax}{\arg\!\max}
\newtheorem{theorem}{Theorem}
\newtheorem{lemma}[theorem]{Lemma}
\definecolor{Gray}{gray}{0.9}
\definecolor{LightCyan}{rgb}{0.88,1,1}
\algnewcommand{\IIf}[1]{\State\algorithmicif\ #1\ \algorithmicthen}
\algnewcommand{\IElse}[2]{\State\algorithmicelse\ #2\ }
\algnewcommand{\EndIIf}{\unskip\ \algorithmicend\ \algorithmicif}
\begin{document}
\setlength{\parskip}{0em}

\title{Layered Coding for Energy Harvesting Communication Without CSIT}
	\author{Rajshekhar~Vishweshwar~Bhat,~\IEEEmembership{Graduate~Student~Member,~IEEE,}
		Mehul~Motani,~\IEEEmembership{Senior~Member,~IEEE,}
		and~Teng~Joon~Lim,~\IEEEmembership{Fellow,~IEEE}}
	\maketitle
\vspace{ -1.2cm}
\begin{abstract}
	\vspace{ -.2cm}
Due to stringent constraints on resources, it may be infeasible to acquire the current channel state information at the transmitter in energy harvesting communication systems. 
In this paper, we optimize an energy harvesting transmitter, communicating over a slow fading channel, using layered coding. The transmitter has access to the channel statistics, but does not know the exact channel state. In layered coding, the codewords are first designed for each of the channel states at  different rates, and then the codewords are either 
time-multiplexed or superimposed before the transmission, leading to two transmission strategies. 
The receiver then decodes the information adaptively based on the realized channel state. The transmitter is equipped with a finite-capacity battery having non-zero internal resistance. In each of the transmission strategies, we first formulate and study an average rate maximization problem with non-causal knowledge of the harvested power variations. We also highlight the structural properties of the optimal solution. Further, assuming statistical knowledge and causal information of the harvested power variations, we propose a sub-optimal algorithm, and compare with the stochastic dynamic programming based solution and a greedy policy. By numerical simulations, we also show that the internal resistance significantly affects the system performance. 
\end{abstract}

\IEEEpeerreviewmaketitle
\vspace{ -.2cm}
\section{Introduction}
\vspace{ -.1cm}
Recently, there has been a tremendous interest in sustainable wireless communication systems powered solely from  natural energy harvesting (EH) sources \cite{IR, survey1,survey2,survey3,Vaneet,ulukusTGCN,vincent}. Though extremely promising, it poses several challenges in system design as the power generated from EH sources randomly varies with time. The harvested energy needs to be stored in and drawn from the batteries, at appropriate rates, for reliable system operation. In the process, due to the source and load power fluctuations, the charge and discharge powers of the battery are more variable and unpredictable than in conventional systems \cite{IR}. This necessitates a fundamental change in the  way we store and use the harvested energy mainly given that the battery charge/discharge efficiencies depend on the charge and discharge powers \cite{Krieger, IR} when the batteries exhibit non-negligible internal resistances. 

In this paper, we consider an EH transmitter communicating over a \emph{quasi-static channel} -- a slow fading channel in which the fading realizations remain constant for a certain period of time, known as the coherence block, and change independently across the blocks. 
In EH systems, it may be infeasible to acquire the current channel state information (CSI) at the transmitter (CSIT)  due to stringent constraints on the resources, such as energy, bandwidth and processing capabilities  \cite{ISIT14SFDCSIT}. Hence, we assume that the transmitter only knows the channel distribution and the current CSIT is unavailable. Further, the receiver has  perfect CSI. 

In many emerging applications, such as Internet of Things and Machine-type communications, practical delay and latency requirements may prohibit a codeword from spanning multiple coherence blocks  \cite{Codelength4G}. Hence, the codewords do not experience the average fading process. However, the codeword lengths can be long enough to achieve the reliable communication using channel codes. In such cases, it has been shown that layered coding, a technique which facilitates the adaptation of the transmission rate to the realized channel state, achieves a higher throughput than transmission with a single fixed rate \cite{BCoriginal, Diggavi,Berry,Goldsmith,Erkip05,Shamai}. In this work, our objective is to maximize the average achievable rate using layered coding by optimally managing the battery charging and discharging schedules across $K$ coherence blocks (frames). 

While there are many works in EH communication which assume perfect CSIT with causal and non-causal knowledge of the harvested power \cite{IR, survey1,survey2,survey3,Vaneet}, works considering imperfect, delayed or absent CSIT have been relatively scarce. \cite{Training} optimizes the resource allocation for a training-based channel estimation. Outage minimization problems in EH nodes communicating over slow fading channels have been  considered in \cite{outage,outage1,outage2,outage3,IK}. \cite{imperfect_icc} presents optimal transmission policies with imperfect CSIT and without CSIT in fast fading channels. In non-EH communication, rate maximization and distortion minimization problems in quasi-static channels without CSIT have been  studied \cite{BCoriginal, Diggavi,Berry,Goldsmith,Erkip05,Shamai}.  EH-powered broadcasting nodes transmitting over static channels have been considered in \cite{BC-cutoff,BC-cutoff-finite}. 
The major challenge in this work is in  accounting for the circuit cost, internal resistance and capacity limitation of the battery, and EH-related constraints in the rate maximization problem using layered coding.
The main contributions of this paper are as follows:
\begin{itemize}[leftmargin=*]
	\item  We formulate and analyze average rate maximization problems in the \emph{offline} case with non-causal knowledge of the harvested power variations under two transmission strategies  wherein the codewords are either time-multiplexed or superimposed before the transmission.   
	\item For the superposition coding based strategy, we provide a simple and concise interpretation, referred to as layered water-filling algorithm, for the optimal solution in an ideal single frame case, based on which we present efficient algorithms in more general cases.     
	\item With statistical knowledge and causal information of harvested power, we propose a sub-optimal \emph{online} algorithm based on the offline solution and compare with stochastic dynamic programming based solution and a greedy policy under the time-multiplexed and superposition coding based transmission strategies. 
\end{itemize} 
\begin{figure*}[t]
	\centering
	\begin{tikzpicture} [scale=2.5]
	\draw  (1,0.125) -- (1.28,0.125) node [midway,above] {\scriptsize $U$}  node [left,xshift=-.85cm] {\scriptsize harvested power};
	\draw [->] (1.1,0.125) -- (1.15,0.125);
	
	\draw  (1.35,0.125) circle (3pt) node [align=right]  at (1.82,0.12)  {\scriptsize Power Splitter};
	\draw  (1.35,0.225) -- (1.35,0.44) node[right,midway] {\scriptsize $\alpha(t) U$};
	\draw  (1.35,0.015) -- (1.35,-0.19) node[right,midway,yshift=.2cm] {\scriptsize $(1-\alpha(t))U$};
	\draw [->] (1.35,0.225) -- (1.35,0.35);
	\draw [->] (1.35,0) -- (1.35,-0.1);
	\draw (1.28,0.125) -- (1.35,0.225);
	\draw (1.28,0.125) -- (1.35,0.015);
	
	\draw (1.35,0.44) -- (3.2,.44) node[above,midway] {\scriptsize direct path with zero losses};
	\draw [->] (1.35,0.44) -- (2.4,0.44);
	\draw (1.35,-0.19) -- (1.75,-0.19);
	\draw (1.75,-0.30) rectangle (2.75,-0.08) node [align=center]  at (2.25,-0.19)  {\scriptsize Battery (r, $B_{\mathrm{max}}$)};
	\draw (2.75,-0.19) -- (3.2,-0.19) node [midway,above] {\scriptsize $d(t)$};
	\draw [->](2.85,-0.19) -- (2.9,-0.19);
	\draw  (3.2,0.13) circle (3pt) node {\small +} node at (2.65,0.125) {\scriptsize Power Combiner};
	\draw (3.2,-0.19) -- (3.2,.03);
	\draw [->] (3.2,-0.1) -- (3.2,-0.08);
	\draw (3.2,0.44) -- (3.2,.23);	
	\draw [->] (3.2,0.35) -- (3.2,.33);
	\draw (3.3,0.13) -- (3.4,0.13);
	\draw [->](3.3,0.13) -- (3.36,0.13);
	\draw (3.4,-0.25) rectangle (4.25,.5) node [align=center] at (3.85, 0.25)  {\scriptsize Transmitter} node  [align=center] at (3.842,0.02)  {\scriptsize circuit cost $P_C$};
	\draw (4.25,.13) -- (4.5,0.13);
	\draw [->] (4.25,.13) -- (4.35,0.13);
	\tikzset{
		buffer/.style={
			draw,
			shape border rotate=-90,
			isosceles triangle,
			isosceles triangle apex angle=60,
			node distance=.03cm,
			minimum height=.03em
		}
	}
	\draw (4.5,0.13) -- (4.5,.32);
	\node at (4.5,.46)[buffer]{};
	
	\draw (4.6,.34) -- (5.15,.34) node [above,midway] {\scriptsize $H$};
	\node at (5.25,.46)[buffer]{};	
	\draw (5.25,0.13) -- (5.25,.32);
	\draw (5.25,.13) -- (5.5,0.13);	
	\draw [->](5.25,.13) -- (5.35,0.13);		
	\draw (5.5,-0.25) rectangle (6.25,.5) node [align=center] at (5.85, 0.12)  {\scriptsize Receiver};
	\end{tikzpicture}
	\caption{\scriptsize The dual-path EH communication system. In a frame of length $\tau$ seconds, at any time $t$ ($0\leq t\leq \tau$), instantaneous fraction, $\alpha(t)$ ($0\leq\alpha(t)\leq 1$) of the harvested power ($U$ \si{\watt}) can be directed to the load through the direct path. The remaining power is directed to the battery having internal resistance of $r$ \si{\ohm} and capacity $B_{\mathrm{max}}$ \si{\joule}. The battery is discharged at an instantaneous rate $d(t)$ W.  The transmitter consumes $P_C$ \si{\watt} for its operation  during transmission but does not consume any power when not transmitting. }
	\label{fig:BD} 
\end{figure*}
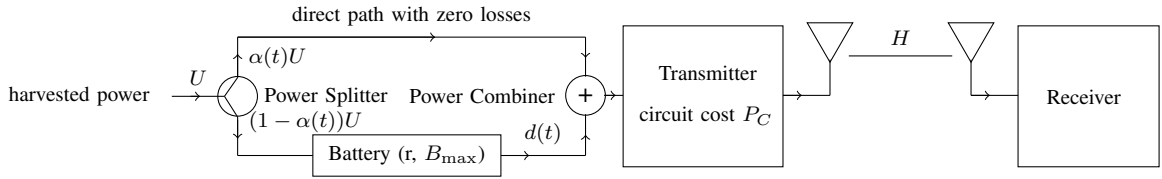
The remainder of the paper is organized as follows. The system model and assumptions are presented in Section \ref{sec:model}. In Section \ref{sec:problem formulation}, we formulate the generic optimization problem which is solved in Section \ref{sec:LTM_strategy} and Section \ref{sec:LSC_strategy}. Online policies are presented in Section \ref{statistical}. Numerical results are presented in Section \ref{sec:numerical results} followed by concluding remarks in Section \ref{sec:reflections}.

\section{System Model and Assumptions}\label{sec:model}
\subsection{Block Diagram and System Operation}
The block diagram of the system under study is given in Fig.  \ref{fig:BD}. The power splitter divides the harvested power, $U$ \si{\watt}, to simultaneously charge the battery and power the transmitter directly.  The power combiner combines the power drawn from the battery and the direct path. The transmitter consumes $P_C$ W for circuit operation  during transmission but does not consume any power when not transmitting \cite{IR,ulukusTGCN}. 
We assume that the internal resistance and the maximum capacity of the battery are $r$ \si{\ohm} and $B_{\mathrm{max}}$ \si{\joule}, respectively. 

\subsection{Battery Charge and Discharge Model}
As in \cite{IR}, we model the battery as a voltage source/sink with a series internal resistance of $r$ \si{\ohm}. 
In practice, when the EH source (transmitter) attempts to charge (discharge) the battery, a fraction of the charging (discharging) power is lost in the form of heat dissipated by the internal resistance of the battery. To describe this impact of the internal resistance, we present a block diagram in Fig. \ref{fig:flow-diagram} where the battery with internal resistance is depicted as a ideal battery with two additional blocks that model the effect of the internal resistance. When the power is driven to the battery at $V$ \si{\watt},  the rate at which energy accumulates in the battery is  $\mathcal{F}_c(V,r)$ \si{\watt}. The remaining $V-\mathcal{F}_c(V,r)$ \si{\watt} is lost in the internal resistance. Similarly, when the battery is discharged at $d$ \si{\watt},  the rate at which energy is available at the load is $\mathcal{F}_d(d,r)$ \si{\watt} and the remaining  $d-\mathcal{F}_d(d,r)$ \si{\watt} is lost in the internal resistance.
Based on  \cite{IR}, we assume that the functions $\mathcal{F}_c(V,r)$ and $\mathcal{F}_d(d,r)$ have the following properties. 
\begin{itemize}[leftmargin=*]
	\item $\mathcal{F}_c(V,r)$ and $\mathcal{F}_d(d,r)$ are concave functions of $V$ and $d$, respectively, for a fixed internal resistance $r$.  
	\item $\mathcal{F}_c(V,r)\leq V$ and $\mathcal{F}_d(d,r)\leq d$ for a fixed $r$. 
	\item   $\mathcal{F}_c(V,r)$ and $\mathcal{F}_d(d,r)$ are decreasing functions of $r$ for  fixed values of $V$ and $d$. 
\end{itemize}
In this work, the internal resistance, $r$ is not an optimization variable. Hence, in the rest of the paper, we denote $\mathcal{F}_c(V,r)$ as $\mathcal{F}_c(V)$, and $\mathcal{F}_d(d,r)$ as $\mathcal{F}_d(d)$ for brevity. 
In this work, our analysis is fully general in the sense that it holds for any $\mathcal{F}_c(V)$ and $\mathcal{F}_d(d)$ with the above properties. 

\begin{figure}[t]
	\centering
	\begin{tikzpicture}[scale=1.5]
	\draw [dotted,fill=black!5](1.3,-0.5) rectangle (7.2,0.5) node [above, midway,yshift=0.6cm] {\scriptsize Battery with Internal Resistance ($r$ \si{\ohm})};   
	\draw [->](0,0) -- (1.5,0) node [midway,above,xshift=-0.3cm] {\scriptsize $V=(1-\alpha)U$};
	\draw (1.5,-0.3) rectangle (2.5,0.3) node [pos=0.5] {\scriptsize $\mathcal{F}_c(\cdot,r)$};
	\draw [->] (2.5,0) -- (3.75,0) node [midway,above] {\scriptsize $\mathcal{F}_c(V,r)$};
	\draw (3.75,-0.35) rectangle (5.25,0.35) node [pos=0.5,yshift=0.13cm] {\scriptsize Ideal }; 
	\draw (3.75,-0.35) rectangle (5.25,0.35) node [pos=0.5,yshift=-0.13cm] {\scriptsize Battery}; 
	\draw [->] (5.25,0) -- (6,0) node [midway,above] {\scriptsize $d$};
	\draw (6,-0.3) rectangle (7,0.3) node [pos=0.5] {\scriptsize $\mathcal{F}_d(\cdot,r)$};
	\draw [->] (7,0) -- (8.25,0) node [midway,above,xshift=0.23cm] {\scriptsize $\mathcal{F}_d(d,r)$}; 
	\end{tikzpicture}
	\caption{The battery with internal resistance is depicted as the ideal battery with two additional blocks that model the effect of the internal resistance. When the power is driven to the battery at $V$ \si{\watt}, the rate at which energy accumulates in the battery is  $\mathcal{F}_c(V,r)$ \si{\watt}. Similarly, when the battery is discharged at $d$ \si{\watt},  the rate at which energy is available at the load is $\mathcal{F}_d(d,r)$ \si{\watt}.}
	\label{fig:flow-diagram}
\end{figure}
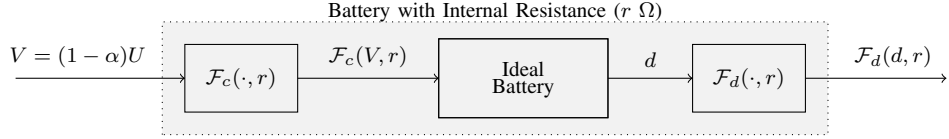

\subsection{Frame Structure}
We assume that the transmission frame length, denoted by $\tau$ (seconds), is smaller than the coherence block length. When the total available energy in a frame is lower than the total energy required to operate the system,  the transmission can occur only over a fraction of the frame duration \cite{IR}. Let $\phi \; (\phi\leq \tau)$ be the duration for which the system is transmitting in a frame. The power splitting ratio, the fraction of the harvested power directly used for the transmission, is $\alpha(t)$ at time $t$, where $0 \leq \alpha(t) \leq 1$. The frame structure (see Fig. \ref{fig:Frame}) is as follows. 
\begin{itemize}[leftmargin=*]
	\item \emph{Non-transmission} phase: over the time duration $[0,\tau-\phi)$, the battery is charged at the optimal uniform rate, $V_a^*=(1-\alpha_a^*)U$ W, where $\alpha_{a}^*=\max_{0\leq \alpha\leq 1} \mathcal{F}_c\left((1-\alpha)U\right)$. No information is transmitted in this phase. 
	\item \emph{Transmission} phase: over the time duration $[\tau-\phi,\tau]$, information is transmitted while the battery is being charged at the instantaneous rate, $V_b(t)=(1 - \alpha_b(t))U$ \si{\watt} and discharged at the instantaneous  rate $d(t)$ W. Whether the battery is being charged or discharged,  $\alpha_b(t)$ fraction  of the harvested power is directly delivered to the transmitter.   
\end{itemize}
We note that the frame structure in Fig. \ref{fig:Frame} has been shown to be necessary and sufficient to extract the maximum possible performance from the system \cite{IR}. 

 \begin{figure*}[t]
	\centering
	\begin{tikzpicture} [scale=10]
	\draw (0,0) -- (.4,0) node [below, midway] {\scriptsize $\alpha(t)=\alpha_a^*$}	;
	\draw (0,0) -- (.4,0); 	
	\draw (.4,0) -- (1,0) node [below, midway] {\scriptsize $\alpha(t)=\alpha_b(t)$}	;
	\draw (.4,0) -- (1,0) node [below, midway, yshift=-0.35cm] {\scriptsize $d(t)$};		
	
	\draw (0,0.05) -- (0, -0.05) node [below] {\scriptsize $0$};	;
	\draw (1,0.05) -- (1, -0.05) node [below] {\scriptsize $ \tau$};	
	\draw (.4,0.05) -- (.4, -0.05) node [below] {\scriptsize $\tau-\phi$};	
	
	\draw (0,0) -- (.4,0) node [above, midway] {\scriptsize non-transmission phase};
	\draw (.4,0) -- (1,0) node [above, midway] {\scriptsize transmission phase};	
	\end{tikzpicture}
	\caption{\scriptsize The communication frame structure adopted in the paper. The frame length is $\tau$ seconds.  During $[0,\tau-\phi)$, the battery is charged at $V_a^*=(1-\alpha_a^*)U$ \si{\watt} and the discharge power is zero. In this period, no information is transmitted.   During $[ \tau-\phi,\tau]$, information is transmitted  and the battery is charged at an instantaneous rate $V_b(t)=(1-\alpha_b(t))U$ \si{\watt} and discharged at  an instantaneous rate $d(t)$ \si{\watt}. }
	\label{fig:Frame}
\end{figure*}
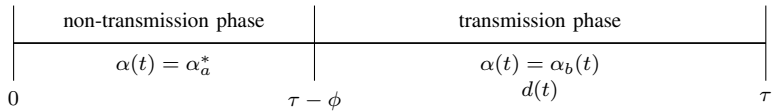

\subsection{Channel and Energy Models}
The communication is over a quasi-static channel with a random channel power gain $H$, corrupted by an additive white Gaussian noise  having  variance $N_0$ \si{\watt/\hertz}. We assume that  $H$ has $N$ non-zero discrete realizations as follows. For $i=1,\ldots,N$, the channel gain realization is $h_i$ with probability $p_i$, where $h_1<h_2<\ldots<h_N$, $0\leq p_i \leq 1$ and $\Sigma_{i=1}^Np_i=1$. The harvested power in any frame $k$ is a random variable $U_k$ whose realization remains constant in a frame and changes independently across frames, and $U_1,\ldots,U_K$, are independent and identically distributed. The value of $U_k$ is measured at the start of frame $k$, and therefore can be used in system optimization in frame $k$.
\subsection{Rate Function}
We assume that the maximum achievable data rate is $\mathcal{G}(x)$ when the instantaneous signal-to-noise ratio  at the receiver is $x$, and that $\mathcal{G}(x)$ is a concave, strictly increasing, invertible function of $x$. Most practical coded-modulation schemes exhibit such a relationship \cite{Goldsmithbook}. If the transmission rate is greater than $\mathcal{G}(x)$, the information cannot be decoded and an outage event is declared at the receiver.   Without loss of generality, we assume  the unit noise power spectral density, i.e., $N_0=1$, for the analysis. 

\section{Problem Formulation}\label{sec:problem formulation}
In this work, we adopt the following layered coding technique. The codewords (layers) are first designed for each of the channel states at different rates. The layers corresponding to larger (smaller) values of $h_i$'s are referred to as  higher (lower) layers. The rates of the layers are designed such that when the channel gain realization is $h_i$, layers $1$ to $i$ can be successfully decoded. To transmit all the layers  in the given frame, the layers are either time-multiplexed or superimposed. This leads to two transmission strategies -- the layered time-multiplexing (LTM) strategy, where the layers are time-multiplexed and the layered superposition coding (LSC) strategy,  where the layers are superimposed before the transmission. The receiver then decodes the information adaptively based on the realized channel state. 

In any given transmission strategy, let the instantaneous power allocated to layer $i$ of frame $k$ be $P_{i,k}(t)$. Let $R_{k}\left(P_{1,k}(t),\ldots,P_{N,k}(t)\right)$, $\alpha_{b_k}(t)$, $d_k(t)$  and $\phi_k$ denote the average rate, instantaneous power splitting ratio, instantaneous discharge power and transmission duration in any frame $k$, respectively.  In the offline case, when the values of $U_1,\ldots,U_K$ are known at the start of the first frame, the optimization problem of interest is: 
\begin{subequations}\label{eq:opt_gen}
	\begin{alignat}{2}
	\mathrm{P_{\mathrm{gen}}}:&\underset{\substack{ \{P_{1,k}(t),\ldots,P_{N,k}(t),d_k(t)\}\\\{0\leq \phi_k\leq \tau,\alpha_{b_k}(t)\}}}{\text{maximize}}\;\;\frac{1}{K}\sum_{k=1}^{K} \phi_k  R_{k}\left(P_{1,k}(t),\ldots,P_{N,k}(t)\right) \quad \text{s.t.}\\
	&  \sum_{k=1}^{j}\int_{0}^{\tau'}\left(\sum_{i=1}^{N}P_{i,k}(t)-g_k(\alpha_{b_{k}}(t),d_{k}(t))\right)dt\leq 0,  \;\; \text{for} \;\; 0\leq \tau' \leq  \tau  \label{eq:0c1} \\
	& P_{1,k}(t),\ldots,P_{N,k}(t), d_k(t) \geq 0, 0\leq \phi_k\leq \tau, 0\leq \alpha_{b_k}\leq 1, \;\; \text{for} \;\; 0\leq t\leq \tau
	\end{alignat}
\end{subequations}
for $j,k=1, \ldots,K$,  where \eqref{eq:0c1} is the energy causality constraint,  $g_k(\alpha_{b_{k}}(t),d_{k}(t))$ is  the instantaneous amount of energy available in frame $k$ including losses. 

In the next two sections, we reformulate and solve $\mathrm{P_{\mathrm{gen}}}$ in \eqref{eq:opt_gen} in LTM and LSC strategies. Since $K$ is a constant, without loss of generality, we maximize the sum rate across the $K$ frames instead of the average rate. In both the strategies, we note that the reformulated optimization problems are non-convex and transform them into equivalent problems which are non-convex in general, but, convex when $B_{\mathrm{max}}=\infty$. Based on the analytical solutions to the equivalent problems for $B_{\mathrm{max}}=\infty$, we solve the equivalent problems for arbitrary $B_{\mathrm{max}}$ for $K=1,2$. These optimal solutions are used to propose sub-optimal online algorithms in each of the strategies.

It is well known that the LSC strategy performs better than the LTM strategy \cite{Erkip05}. However, the implementation of the LSC strategy is complex as the power allocation across the layers are coupled, symbols are superimposed and decoding is sequential \cite{Erkip05}. In both the strategies,  the amount of information reliably decoded by the receiver depends on the realized channel state. At the end of every frame, the receiver sends an acknowledgment to the transmitter to indicate the amount of information decoded in the current frame. We note that the overheads associated with communicating an acknowledgment is negligible compared to the data payload.  Hence, in many systems, though acquiring the current CSIT is impractical, communicating the acknowledgment in every frame is feasible.

Before we proceed, we present an important result in the following lemma. This result will be used in the proofs of later results. 
\begin{lemma}\label{lemma:c-d}
	When $\mathcal{F}_c(\cdot)$ and  $\mathcal{F}_d(\cdot)$ are strictly concave functions, it is not optimal to charge and discharge the battery simultaneously, i.e.,  $(1-\alpha_{b_k}(t)^*)d_k(t)^*=0$. 
\end{lemma}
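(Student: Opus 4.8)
The plan is an interchange argument. Suppose, for contradiction, that an optimal solution of $\mathrm{P_{\mathrm{gen}}}$ has, for some frame $k$, a subset $\mathcal S$ of the transmission interval $[\tau-\phi_k,\tau]$ of positive Lebesgue measure on which both $V:=(1-\alpha_{b_k}(t)^*)U>0$ and $d:=d_k(t)^*>0$; I will modify the policy on $\mathcal S$ to produce a feasible solution with a strictly larger objective. Two elementary consequences of the stated properties of $\mathcal F_c$ and $\mathcal F_d$ drive everything: (i) each $\mathcal F\in\{\mathcal F_c,\mathcal F_d\}$ is concave with $\mathcal F(0)=0$ and $\mathcal F(x)\le x$, so $x\mapsto\mathcal F(x)/x$ is non-increasing and bounded by $1$, whence $\mathcal F(b)-\mathcal F(a)\le b-a$ for all $0\le a\le b$; and (ii) strict concavity together with $\mathcal F(0)=0$ forces $\mathcal F(x)<x$ for every $x>0$ (else the midpoint inequality at $x/2$ fails).

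For $t\in\mathcal S$, the net rate at which the battery gains energy is $\mathcal F_c(V)-d$, while the power actually delivered to the transmitter depends on $(\alpha_{b_k},d_k)$ only through $L:=(U-V)+\mathcal F_d(d)$ (the circuit cost $P_C$ and the transmission duration $\phi_k$ are untouched by this rescheduling). I replace $(V,d)$ on $\mathcal S$ by $(V',d')$ with $V'd'=0$ and $\mathcal F_c(V')-d'=\mathcal F_c(V)-d$: keeping the pointwise net battery flow unchanged leaves the entire battery-charge trajectory unchanged, hence preserves the energy-causality chain \eqref{eq:0c1} and any battery-capacity constraint. If $\mathcal F_c(V)-d\ge0$, take $d'=0$ and, by the intermediate value theorem, $V'\in[0,V)$ with $\mathcal F_c(V')=\mathcal F_c(V)-d$; then $d=\mathcal F_c(V)-\mathcal F_c(V')\le V-V'$ by (i), so the new delivered power $U-V'=L+\big[(V-V')-\mathcal F_d(d)\big]>L$ because $(V-V')-\mathcal F_d(d)\ge d-\mathcal F_d(d)>0$ by (ii). If $\mathcal F_c(V)-d<0$, take $V'=0$ and $d'=d-\mathcal F_c(V)>0$; then the new delivered power $U+\mathcal F_d(d')=L+\big[V-\big(\mathcal F_d(d)-\mathcal F_d(d-\mathcal F_c(V))\big)\big]>L$ since $\mathcal F_d(d)-\mathcal F_d(d-\mathcal F_c(V))\le\mathcal F_c(V)<V$ by (i)--(ii). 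In both cases strictly more energy is made available on $\mathcal S$.

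Because $\mathcal S$ has positive measure, $\phi_k>0$, and the surplus can be spent to strictly increase $R_k$: scale all layer powers $P_{i,k}(\cdot)$ on $\mathcal S$ by a common factor $1+\epsilon>1$ small enough that \eqref{eq:0c1} stays feasible; this strictly increases every layer's achievable rate (immediate for LTM, and for LSC because scaling signal and interference powers by the same factor strictly increases every successive-decoding SINR), hence strictly increases $R_k$ and the objective of $\mathrm{P_{\mathrm{gen}}}$ — contradicting optimality. Therefore $\{t:(1-\alpha_{b_k}(t)^*)U>0,\ d_k(t)^*>0\}$ is null for every $k$, i.e. $(1-\alpha_{b_k}(t)^*)d_k(t)^*=0$.

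I expect the main obstacle to be the feasibility bookkeeping rather than the inequalities: one must argue carefully that fixing the pointwise net battery flow $\mathcal F_c(V)-d$ pins down the battery-charge level for all subsequent times (so that both the cross-frame constraints \eqref{eq:0c1} and the $B_{\mathrm{max}}$ limit survive the modification), that the available-energy term appearing in \eqref{eq:0c1} depends on $(\alpha_{b_k},d_k)$ only through $L$ so the modification can only relax \eqref{eq:0c1}, and that a perturbation supported on a positive-measure subset of a single frame really can be converted into a strict gain in the objective. Once (i)--(ii) are in hand, the two-case computation itself is routine.
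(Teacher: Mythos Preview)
Your argument is correct and is essentially the paper's proof: the same two-case interchange (depending on the sign of $\mathcal F_c(V)-d$) that keeps the pointwise net battery flow fixed and shows the delivered load power strictly increases. The paper simply asserts $P_k(t)'>P_k(t)$ ``from the strict concavity of $\mathcal{F}_c(\cdot)$ and $\mathcal{F}_d(\cdot)$ and $\mathcal{F}_c(V)\le V$, $\mathcal{F}_d(d)\le d$,'' whereas you make the underlying inequalities (i)--(ii) and the feasibility bookkeeping explicit; otherwise the constructions coincide.
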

\begin{proof}
	Let $\{\alpha_{b_i}(t),d_{b_i}(t)\}_{i=1}^K$, be a feasible solution that satisfies $(1-\alpha_{b_i}(t))d_{b_i}(t)>0$ for some $k$. 
	In this case, the total power available at the load in frame $k$, $P_k(t)=\alpha_{b_k}(t)U_k+\mathcal{F}_d(d_k(t))$. While keeping the net charge rate (the difference between the rate at which energy accumulates in the battery and the rate at which energy gets depleted from the battery) the same, whenever $d_k(t)\geq \mathcal{F}_c((1-\alpha_{b_k}(t))U_k)$, we can always discharge the battery at $d_k(t)'=d_k(t)-\mathcal{F}_c((1-\alpha_{b_k}(t))U_k)$ and $\alpha_{b_k}(t)'=1$ obtaining the transmit power, $P_{k}(t)'=U_k+\mathcal{F}_d(d_k(t)')> P_k(t)$. Similarly, when  $d_k(t)<\mathcal{F}_c((1-\alpha_{b_k}(t))U_k)$, we can always charge the battery with $\alpha_{b_k}(t)'$ such that $\mathcal{F}_c((1-\alpha_{b_k}(t))U_k)-d_k(t)=\mathcal{F}_c((1-\alpha_{b_k}(t)')U_k)$ and $d_k(t)'=0$ obtaining  $P_{k}(t)'=\alpha_{b_k}(t)'U_k> P_k(t)$. Hence, we can always replace any $(\alpha_{b_k}(t),d_k(t)), \;(1-\alpha_{b_k}(t))d_k(t)=0 $ with  $(\alpha_{b_k}(t)',d_k(t)'),\;(1-\alpha_{b_k}(t)')d_k(t)'=0$ and get a higher total power at the load for the same net charge rate. The inequality, $P_k(t)'>P_k(t)$ follows from the strict concavity of $\mathcal{F}_c(\cdot)$ and  $\mathcal{F}_d(\cdot)$ and the fact that    $\mathcal{F}_c(V)\leq V$ and $\mathcal{F}_d(d)\leq d$ for a fixed $r$.   
\end{proof}
We note that when the internal resistance, $r$ is non-zero, $\mathcal{F}_c(\cdot)$ and  $\mathcal{F}_d(\cdot)$ are strictly concave functions \cite{IR}. 

Now, we consider the LTM strategy. 
\section{LTM Strategy}\label{sec:LTM_strategy}
In the LTM strategy, the frame is divided into $N$ disjoint partitions. The length of partition $i$ of any frame $k$ is $l_{i,k}$ and  $\sum_{i=1}^{N}l_{i,k}=\tau$. The message is coded in $N$ layers and layer $i$ is transmitted in partition $i$  with constant power $P_{i,k}$, i.e., $P_{i,k}$ is not a function of time, $t$, in any given partition\footnote{Due to concavity of $\mathcal{G}(x)$, it can be shown that transmission with an constant power is optimal for the given total energy and time constraints.}. Hence, for any frame $k$,  $P_{i,k}=\alpha_{b_k}(t)U_k+\mathcal{F}_d(d_k(t))-P_C, \phi_k+\Sigma_{j=1}^{i-1}l_{j,k} \leq t \leq \phi_k+\Sigma_{j=1}^{i}l_{j,k}$, where $\Sigma_{j=1}^{0} l_{j,k}=0$. Further, due to concavity of charging and discharging functions, $\mathcal{F}_c(.)$ and $\mathcal{F}_d(.)$, charging and discharging at constant rates is optimal \cite{IR}.  Hence, we assume that $\alpha_{b_k}(t)=\alpha_{b_{i,k}}$ and $d_k(t)=d_{i,k}$ over the $i$th partition in frame $k$, i.e., for $t\in \left[\phi_k+\Sigma_{j=1}^{i-1}l_{j,k},\phi_k+ \Sigma_{j=1}^{i}l_{j,k}\right] $.  Consequently,  $P_{i,k}=\alpha_{b_{i,k}}U_k+\mathcal{F}_d(d_{i,k})-P_C$.  

In any frame $k$, the rate of layer $i$ is designed to be equal to the channel capacity of a static channel with gain $h_i$, i.e., the transmission rate in partition  $i$  of frame $k$ is  $R_{i,k}=\mathcal{G}(h_iP_{i,k})$.  When the actual channel realization is $h_j$, the channel capacity in layer $i$ is $F_{i,k}^{(j)}=\mathcal{G}(h_jP_{i,k})$.  Since $h_1<h_2<\ldots<h_N$, we note that $F_{m,k}^{(j)} \geq R_{m,k},\;\forall\;m\in\{1,\ldots,j\} $ and $F_{m,k}^{(j)} < R_{i,k},\;\forall\;m\in\{j+1,\ldots,N\} $.  Hence, we can successfully decode only layers up to and including layer $j$ and, the higher layers will be in outage.  Consequently, the number of bits successfully transmitted in the frame when $H=h_j$ is $\Sigma_{i=1}^j l_{i,k} R_{i,k}$ and the average rate in frame $k$ is given by
\begin{align}\label{eq:avgRate_LTM}
R_{k}^{(\mathrm{LTM})}=\sum_{j=1}^{N}p_j\sum_{i=1}^{j} l_{i,k} R_{i,k}=\sum_{i=1}^{N}l_{i,k}R_{i,k}\sum_{j=i}^{N}p_j=\sum_{i=1}^{N}q_il_{i,k}R_{i,k}=\sum_{i=1}^{N}q_il_{i,k}\mathcal{G}(h_iP_{i,k})
\end{align}
where $q_i=\Sigma_{j=i}^{N}p_j$. 
In frame $k$, the amount of energy stored in the battery in the non-transmission phase is $(\tau-\phi_k) \mathcal{F}_c(V_{a_k}^*)$.  The amount of energy stored in and drawn from the battery in partition $i$ are $l_{i,k}\mathcal{F}_c(V_{i,k})$ and $l_{i,k}d_{i,k}$, respectively, where $V_{i,k}=(1-\alpha_{b_{i,k}})U_k$. Define $E_{i,k}=l_{i,k}d_{i,k}-l_{i,k}\mathcal{F}_c(V_{i,k})$. We now describe the EH-related constraints. For simplicity, we start with the first frame. Since the amount of energy drawn from the battery cannot be greater than the amount of energy stored in the battery  (the energy causality constraint), we have, $\sum_{i=1}^{m}l_{i,1}d_{i,1}\leq \sum_{i=1}^{m}l_{i,1}\mathcal{F}_c(V_{i,1})+(\tau-\phi_1) \mathcal{F}_c(V_{a_1}^*)+B_0$,  or equivalently, $\sum_{i=1}^{m}E_{i,1}-(\tau-\phi_1) \mathcal{F}_c(V_{a_1}^*)-B_0\leq 0$,  for $m=1,\ldots,N$, where $B_0$ is the initial energy stored in the battery. 
Similarly, in order to avoid  energy overflow in the battery, the amount of energy stored in the battery at any time must be less than or equal to the battery capacity (the battery capacity constraint), i.e., $(\tau-\phi_1) \mathcal{F}_c(V_{a_1}^*)+B_0-\sum_{i=1}^{m}E_{i,1}\leq B_{\mathrm{max}}$,  for $m=1,\ldots,N$.  In general, the energy causality and  battery capacity constraints in partition $m$ of frame $k$ are respectively given  by, 
\begin{align}
&\sum_{i=1}^{m}E_{i,k} +\sum_{j=1}^{k-1}\sum_{i=1}^{N} E_{i,j}-\sum_{j=1}^{k}(\tau-\phi_j) \mathcal{F}_c(V_{a_j}^*)-B_0\leq 0 \label{eq:p0_ltm_c1}\\
&\sum_{j=1}^{k}(\tau-\phi_j) \mathcal{F}_c(V_{a_j}^*)+B_0-\sum_{i=1}^{m}E_{i,k} -\sum_{j=1}^{k-1}\sum_{i=1}^{N} E_{i,j}-B_{\mathrm{max}}\leq 0  \label{eq:p0_ltm_c2}
\end{align} 
Hence, to maximize the sum rate over $K$ frames, $\mathrm{P_{\mathrm{gen}}}$ in \eqref{eq:opt_gen} can be reformulated as, 
\begin{subequations}\label{eq:p0_ltm}
	\begin{alignat}{2}
	\mathrm{P_{0-LTM}}: &\underset{\{\alpha_{b_{i,k}}, {l_{i,k}},d_{i,k},\phi_k\}}{\text{maximize}}\;\sum_{k=1}^{K}\sum_{i=1}^{N}q_il_{i,k} \mathcal{G}\left(h_{i}\left(\alpha_{b_{i,k}}U_k+\mathcal{F}_d(d_{i,k})-P_C\right)\right)\;\;\; \text{s.t. }\\
	& \eqref{eq:p0_ltm_c1}, \eqref{eq:p0_ltm_c2},  \sum_{i=1}^{N}l_{i,k}-\phi_k\leq  0, \;  0\leq \alpha_{b_{i,k}}\leq 1,\;  d_{i,k}, l_{i,k} \geq 0,\; \phi_k\leq \tau \label{eq:p0_ltm_c3}
	\end{alignat}
\end{subequations}
for $i,m=1,\ldots,N$ and  $k=1,\ldots,K$, where \eqref{eq:p0_ltm_c1} and  \eqref{eq:p0_ltm_c2} are the energy causality and battery capacity constraints, respectively. 

$\mathrm{P_{0-LTM}}$ in \eqref{eq:p0_ltm} is non-convex due to coupling between various terms. We now transform $\mathrm{P_{0-LTM}}$ in \eqref{eq:p0_ltm} to an equivalent problem in the following.  
Define $e_{i,k}=d_{i,k}l_{i,k}$ and $\beta_{i,k}=\alpha_{b_{i,k}}l_{i,k}$.  Now, we note that the term $l_{i,k} \mathcal{G}\left(h_{i}\left(\beta_{{i,k}}U_k/l_{i,k}+\mathcal{F}_d(e_{i,k}/l_{i,k})-P_C\right)\right)$ is the perspective of $ \mathcal{G}\left(h_{i}\left(\beta_{{i,k}}U_k+\mathcal{F}_d(e_{i,k})-P_C\right)\right)$ which is a jointly concave function in $\beta_{i,k}$ and $e_{i,k}$. Since, the perspective preserves convexity, the transformed objective is a concave function \cite{Boyd}.
By the similar arguments, we note that $E_{i,k}$'s in \eqref{eq:p0_ltm_c1} are convex functions. 
Hence, $\mathrm{P_{0-LTM}}$ can be transformed to, 
\begin{subequations}\label{eq:p_ltm}
	\begin{alignat}{2}
	\mathrm{P_{LTM}}: \underset{\{\beta_{i,k}, {l_{i,k}},e_{i,k},\phi_k\}}{\text{minimize}}&\;-\sum_{k=1}^{K}\sum_{i=1}^{N}q_il_{i,k} \mathcal{G}\left(h_{i}\left(\frac{\beta_{i,k}U_k}{l_{i,k}}+\mathcal{F}_d\left(\frac{e_{i,k}}{l_{i,k}}\right)-P_C\right)\right)\;\;	\text{s.t. }\\
	&\sum_{i=1}^{m}E_{i,k} +\sum_{j=1}^{k-1}\sum_{i=1}^{N} E_{i,j}-\sum_{j=1}^{k}(\tau-\phi_j) \mathcal{F}_c(V_{a_j}^*)-B_0\leq 0 \label{eq:p_ltm_c1} \\
	&\sum_{j=1}^{k}(\tau-\phi_j) \mathcal{F}_c(V_{a_j}^*)+B_0-\sum_{i=1}^{m}E_{i,k} -\sum_{j=1}^{k-1}\sum_{i=1}^{N} E_{i,j}-B_{\mathrm{max}}\leq 0  \label{eq:p_ltm_c2}\\
	& \sum_{i=1}^{N}l_{i,k}-\phi_k\leq  0, \;  0\leq \beta_{i,k}\leq l_{i,k},\;  e_{i,k}, l_{i,k} \geq 0,\; 0\leq \phi_k\leq \tau \label{eq:p_ltm_c3}
	\end{alignat}
\end{subequations}
for $ m,i=1,\ldots, N$ and $k=1,\ldots,K$, where $E_{i,k}=\left(e_{i,k}-l_{i,k}\mathcal{F}_c(V_{i,k})\right)$, $V_{i,k}=(1-\beta_{i,k}/l_{i,k})U_k$ and  all the constraints are self-explanatory. 
In general, $\mathrm{P_{LTM}}$ in \eqref{eq:p_ltm} is non-convex due to concavity of \eqref{eq:p_ltm_c2}. When  $B_{\mathrm{max}}=\infty$, \eqref{eq:p_ltm_c2} becomes inactive and $\mathrm{P_{LTM}}$  will be convex. In the sequel, we solve $\mathrm{P_{LTM}}$ in \eqref{eq:p_ltm} for  $B_{\mathrm{max}}=\infty$ using Karush-Kuhn-Tucker (KKT) conditions, based on which we obtain the solution for arbitrary  $B_{\mathrm{max}}$ for $K=1,2$. 
For concreteness, we assume $\mathcal{G}(x)=\log(1+x)$. We present the  Lagrangian of $\mathrm{P_{LTM}}$ in \eqref{eq:p_ltm} for $B_{\mathrm{max}}=\infty$ and necessary derivatives in Appendix A. Based on \eqref{eq:db_ltm} -- \eqref{eq:dl_ltm}, we  solve  $\mathrm{P_{LTM}}$ in \eqref{eq:p_ltm} under various cases. 
\subsection{Single Frame Case}
We now consider $\mathrm{P_{LTM}}$ for $K=1$. 
\subsubsection{Ideal Case, $P_C=r=0$}
In this case, clearly, $\phi_k^*=\tau$, $\mathcal{F}_c(x)=x$, $\mathcal{F}_d(y)=y$ and, we have the following result.  
\begin{theorem}\label{lemma:TM_SF_ideal}
	For optimality, it is sufficient to transmit information in at most two layers. Let $i$ and $j\; (j>i)$ be the layers in which the information is transmitted. Then, 
	\begin{itemize}
		\item if it is optimal to exhaust the battery at the end of layer $j$, the  optimal transmit power $P^*_{m,k}=\max\left({q_m}/{\lambda^*}-{1}/{h_m},0\right)$ for $m\in\{i,j\}$,  where $\lambda^*$ is the unique solution to \eqref{eq:lambda} with $v=i$ and $w=j$,    $l_{i,k}^*=\min(\max((B_0+U_k\tau-P_{j,k}^*)/(P_{i,k}^*-P_{j,k}^*),0),\tau)$ and  $l_{j,k}^*=1-l_{i,k}^*$.
		\item if it is optimal to exhaust the battery at the end of layer $i$, $l_{i,k}^*$ is obtained from \eqref{eq:soll}, $l_{j,k}^*=1-l_{i,k}^*$, $P^*_{i,k}=U_k+B_0/l_{i,k}^*$ and $P^*_{j,k}=U_k$.  
	\end{itemize}

%
%
\end{theorem}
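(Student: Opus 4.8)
\emph{Proof sketch (plan).}
The plan is to first reduce the ideal single-frame instance of $\mathrm{P_{LTM}}$ to a transparent convex problem, then use a basic-solution argument on its optimality conditions to show that two layers suffice, and finally read off the two regimes in the statement from the KKT conditions of Appendix~A. For the reduction, note that with $P_C=r=0$ transmitting over the whole frame is strictly better, so $\phi^*=\tau$, and with $\mathcal{F}_c(x)=x$, $\mathcal{F}_d(d)=d$ the transmit power in partition $i$ may be taken to be any $P_i\ge0$ (choose $\alpha_{b_i}=\min\{1,P_i/U\}$ and $d_i=P_i-\alpha_{b_i}U\ge0$, so charging and discharging never overlap), while the net energy drawn from the battery over partition $i$ is $E_i=l_i(P_i-U)$. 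Dropping the frame index, the problem becomes to maximise $\sum_{i=1}^N q_i l_i\log(1+h_iP_i)$ over $l_i,P_i\ge0$ subject to $\sum_i l_i=\tau$ and the causality constraints $\sum_{i=1}^m l_i(P_i-U)\le B_0$, $m=1,\dots,N$. Since permuting the partitions changes neither the objective nor $\sum_i l_i$, one may reorder them so that the partial sums $\sum_{i\le m} l_i(P_i-U)$ are non-decreasing (net-charging partitions first); then every causality constraint is implied by the single budget constraint $\sum_i l_iP_i\le B_0+U\tau$, and the battery-capacity constraint is vacuous once $B_{\mathrm{max}}$ is sufficiently large, which is the regime of interest.

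For the two-layer step, substitute $e_i=l_iP_i$: the objective becomes a sum of perspectives of the concave maps $P\mapsto q_i\log(1+h_iP)$, so the problem is convex with exactly two scalar constraints, $\sum_i l_i=\tau$ and $\sum_i e_i\le B_0+U\tau$. From the KKT conditions \eqref{eq:db_ltm}--\eqref{eq:dl_ltm} (with multiplier $\mu$ for the first constraint and $\lambda\ge0$ for the second), stationarity in the power variables gives, for every active layer, the water-filling power $P_i^*=\big(q_i/\lambda^*-1/h_i\big)^+$, while stationarity in $l_i$ forces $q_i\log(1+h_iP_i^*)-\lambda^*P_i^*$ to equal a common constant $\mu^*$ over all active layers. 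Let $S$ be the set of layers meeting these conditions; on $S$ the durations $\{l_i^*\}_{i\in S}$ are pinned only by the two linear relations $\sum_{i\in S}l_i^*=\tau$ and $\sum_{i\in S}l_i^*P_i^*=B_0+U\tau$, so a vertex (basic feasible solution) of this system---one with at most two nonzero components---is again optimal, because the stationarity relations give $\phi_i(P_i^*)=\mu^*+\lambda^*P_i^*$ and hence the objective is constant along every edge of that polytope. Thus two layers $i<j$ suffice. I expect this to be the main obstacle: it requires first collapsing the $N$ per-partition causality constraints to a single budget (the re-ordering argument) so that only two scalar constraints remain, and then checking that replacing the given optimiser by a two-layer vertex solution does not lower the objective.

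Finally, distinguish according to which causality constraint is tight. If only the terminal one binds, the energy price $\lambda^*$ is the value making both layers stationary in $l_i$, i.e.\ the (admissible) root of \eqref{eq:lambda} with $(v,w)=(i,j)$; the powers are the water-filling values $P_m^*=(q_m/\lambda^*-1/h_m)^+$, $m\in\{i,j\}$, and eliminating $l_j^*=\tau-l_i^*$ from $l_i^*P_i^*+l_j^*P_j^*=B_0+U\tau$ yields $l_i^*=(B_0+U_k\tau-P_{j,k}^*)/(P_{i,k}^*-P_{j,k}^*)$, clamped to $[0,\tau]$ to absorb the degenerate single-layer situations. If instead the battery is already exhausted at the end of the earlier partition $i$, then during partition $j$ there is no stored energy to draw on and charging it would waste energy, forcing $P_j^*=U$, while partition $i$ must drain exactly $B_0$ over its length, forcing $P_i^*=U+B_0/l_i^*$; substituting these into the objective leaves a one-dimensional concave maximisation over $l_i^*$ whose first-order condition is \eqref{eq:soll}, with $l_j^*=\tau-l_i^*$. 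These two regimes exhaust the possibilities, which gives the theorem.
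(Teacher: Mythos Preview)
Your two-layer argument takes a genuinely different route from the paper's. The paper keeps all $N$ per-partition causality constraints, writes the stationarity conditions \eqref{eq:lam}--\eqref{eq:om_lam}, and shows by an explicit algebraic contradiction (the chain \eqref{eq:lambda}--\eqref{eq:verify}, using $\log x<x$) that for three or more active layers the resulting overdetermined system is inconsistent. You instead collapse the $N$ constraints to a single budget, fix the water-filling powers, and observe that the active durations then sit in a polytope cut out by two linear equations on which the objective is affine, so a vertex with at most two nonzero $l_i$ is again optimal. Your argument is cleaner and explains structurally why two is the right number; it also survives degenerate channel statistics where the paper's strict-inequality step could be delicate.

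There is, however, a real gap in how you return to the paper's problem. In the paper's formulation the temporal order is fixed: layer $i$ is transmitted in partition $i$, and the constraints \eqref{eq:p0_ltm_c1} index partial sums in the order $1,\dots,N$. Your reordering step (``put net-charging partitions first'') is therefore a relaxation, and the relaxed single-budget problem can have a strictly larger optimal value than the fixed-order one---precisely when the intermediate constraint $l_i(P_i-U)\le B_0$ binds, which is the second bullet of the theorem. When you reach that regime you \emph{assume} two layers $i<j$ and compute $P_i,P_j,l_i$; but your vertex argument, which lived in the relaxed problem, has not established two-layer sufficiency for the fixed-order problem in this case. The paper closes this separately: once the battery is exhausted at partition $i$, it argues that the post-$i$ partitions (running on harvested power alone) collapse to a single layer $j$, fixes $\mu_k=q_j\log(1+h_jU_k)$, and shows from \eqref{eq:om_lam} that among partitions $1,\dots,i$ only one can be active. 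To complete your sketch you would need either to rerun the vertex argument on the two decoupled sub-problems produced by the tight intermediate constraint, or to prove directly that the fixed-order and reorderable problems share the same optimal value---neither of which is currently in the proposal.
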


\begin{proof}
	See Appendix B.
\end{proof}
A few comments are in order on Theorem \ref{lemma:TM_SF_ideal}. It is interesting to note that transmitting in two layers gives the optimal result for any channel gain distribution.
Given any two partitions, say $i$ and $j$, the optimal $\lambda^*$, $P_{i,k}^*$'s and $P_{j,k}^*$'s,  depend on the channel statistics only. Hence, $P_{i,k}^*$'s and $P_{j,k}^*$'s need to be computed only once for the given system. 
To find the optimal layers, we search across all the possible $N(N-1)/2$ combinations, taken two layers at a time. Hence, the computational complexity of solving  $\mathrm{P_{LTM}}$ in \eqref{eq:p_ltm} based on Theorem \ref{lemma:TM_SF_ideal} is  $\mathcal{O}(N^2)$ for $K=1$. 
\subsubsection{$P_C>0, r>0$}
In this case, the optimal solution is given by the following theorem. 
\begin{theorem}\label{lemma:LTM_SF_non_ideal}
For optimality, it is sufficient to transmit the information in only one layer if $0<\phi_k^*<\tau$. Whenever $\phi_k^*=\tau$, it is sufficient to transmit information in at most two layers. Algorithm \ref{algo:LTM_SF} provides an optimal solution to $\mathrm{P_{LTM}}$ in \eqref{eq:p_ltm} for $K=1$.  
\end{theorem}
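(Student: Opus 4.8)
The plan is to work with the convex reformulation $\mathrm{P_{LTM}}$ of \eqref{eq:p_ltm} at $K=1$ (convex when $B_{\mathrm{max}}=\infty$, since then \eqref{eq:p_ltm_c2} is inactive; the finite-$B_{\mathrm{max}}$ case is obtained afterwards by detecting when \eqref{eq:p_ltm_c2} is active and re-solving with the charge rate capped, which is what Algorithm~\ref{algo:LTM_SF} does). First I would record the structural reductions. Since $r>0$, both $\mathcal{F}_c$ and $\mathcal{F}_d$ are strictly concave, so by Lemma~\ref{lemma:c-d} every partition is either purely charging or purely discharging; combined with the fixed non-transmission charging rate $V_a^*$ and the already-established optimality of constant-power transmission and constant charge/discharge rates, the energy delivered to the load in partition $i$ becomes a well-defined concave function of that partition's resources. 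There is also no loss in assuming $\sum_i l_{i,k}=\phi_k$, since shortening $\phi_k$ to $\sum_i l_{i,k}$ only lengthens the non-transmission phase and stores more energy. The remaining freedom is then the layer powers $P_{i,k}$, the lengths $l_{i,k}$, the duration $\phi_k$, and the charge/discharge split, and I would invoke the KKT conditions via the Lagrangian and derivatives of Appendix~A, noting that stationarity in $(\beta_{i,k},e_{i,k})$ ties every active layer's power to a single energy price $\lambda_k^*$ through $q_ih_i\mathcal{G}'(h_iP_{i,k}^*)=\lambda_k^*$, i.e.\ $P_{i,k}^*=\max(q_i/\lambda_k^*-1/h_i,0)$ for $\mathcal{G}(x)=\log(1+x)$.

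For the case $0<\phi_k^*<\tau$, the constraint $\phi_k\le\tau$ is slack, so transmission time can be traded in both directions, and this is the key. I would run the exchange argument: if two layers $i<j$ are both used with positive length, shrink partition $i$ by $\epsilon$ and, keeping the total load energy unchanged, extend partition $j$ by $\delta=\epsilon(P_{i,k}^*+P_C)/(P_{j,k}^*+P_C)$; the net change in transmission time is $\epsilon(P_{i,k}^*-P_{j,k}^*)/(P_{j,k}^*+P_C)$, which stays feasible because $\phi_k^*<\tau$ absorbs any increase, and it respects energy causality since partition $i$ precedes partition $j$. Optimality in both directions forces the value densities $q_i\mathcal{G}(h_iP_{i,k}^*)/(P_{i,k}^*+P_C)$ and $q_j\mathcal{G}(h_jP_{j,k}^*)/(P_{j,k}^*+P_C)$ to be equal; re-parametrizing a layer by its allotted energy shows moreover that the optimal within-layer power $P_{i,k}^*$ is exactly the maximizer of $\mathcal{G}(h_iP)/(P+P_C)$, so two layers can coexist in the solution only if their maximal value densities coincide — a non-generic coincidence, and even then one of them suffices. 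Hence at most one layer is used, and the residual problem (choose $i^*$, then optimize $P_{i^*,k}$, $\phi_k$, and the charge/discharge split, balancing the rate against $P_C$ and against the losses $\mathcal{F}_c,\mathcal{F}_d$) is one-dimensional and is precisely the single-layer subproblem in Algorithm~\ref{algo:LTM_SF}.

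For the case $\phi_k^*=\tau$, time is binding with multiplier $\nu_k^*>0$. Fixing the energy price $\lambda_k^*$ (hence the powers $P_{i,k}^*(\lambda_k^*)$), the problem in $\{l_{i,k}\}$ is a linear program with two non-trivial constraints ($\sum_i l_{i,k}=\tau$ and the energy budget) plus nonnegativity, so it admits an optimal vertex with at most two nonzero $l_{i,k}$; equivalently, the per-layer stationarity reduces (after eliminating $\beta_{i,k},e_{i,k}$) to a condition of the form $q_i\mathcal{G}(h_iP_{i,k}^*)-\lambda_k^*(P_{i,k}^*+P_C)=\nu_k^*$ for active layers and $\le\nu_k^*$ otherwise, and the log-type shape of these per-layer payoffs (as in the proof of Theorem~\ref{lemma:TM_SF_ideal}) forbids three simultaneous maximizers at the particular $\lambda_k^*$ that clears the energy budget. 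This gives ``at most two layers''; the identity of the two layers, whether the battery is exhausted at the end of the first or the second, and the closed forms for $P_{i,k}^*,l_{i,k}^*$ follow by the same bookkeeping as in Theorem~\ref{lemma:TM_SF_ideal}, now with each argument of $\mathcal{G}$ offset by $P_C$ and energies routed through $\mathcal{F}_c,\mathcal{F}_d$ under the no-simultaneous-charge/discharge rule. Finally, since Algorithm~\ref{algo:LTM_SF} enumerates the $\mathcal{O}(N)$ single-layer candidates and the $\mathcal{O}(N^2)$ layer pairs, solves each in closed form, and returns the best, the above characterization shows its output is globally optimal.

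The step I expect to be the main obstacle is the $0<\phi_k^*<\tau$ case: one must use the slackness of $\phi_k\le\tau$ essentially, because a fixed-energy, fixed-time exchange between two layers does not in general improve the objective — which is exactly why two incomparable layers can be optimal when $\phi_k^*=\tau$ — and making rigorous the chain ``time can be traded $\Rightarrow$ each used layer runs at its value-density-maximizing power $\Rightarrow$ two used layers have equal maximal value densities $\Rightarrow$ one layer suffices'' needs care in (i) defining the per-partition load energy cleanly once $\mathcal{F}_c,\mathcal{F}_d$ and possible intra-frame charging are present, where Lemma~\ref{lemma:c-d} is used, and (ii) checking that the perturbation stays feasible for the energy-causality constraints given the partition order. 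A secondary point is verifying that spending all available energy is optimal for $K=1$, which is where the finite-$B_{\mathrm{max}}$ patching interacts with the single-layer characterization.
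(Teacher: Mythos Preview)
Your approach differs from the paper's in a notable way. The paper works entirely through the KKT system of Appendix~A: when $0<\phi_k^*<\tau$, the stationarity condition in $\phi_k$ (equation \eqref{eq:dphi_ltm} with $\nu_{\phi_k}=\omega_{\phi_k}=0$) pins $\mu_k$ to $\mathcal{F}_c(V_{a_k}^*)\lambda_{m,k}$, eliminating one free multiplier relative to the ideal case. Substituting this and the expressions from \eqref{eq:db_ltm}, \eqref{eq:de_ltm} into \eqref{eq:dl_ltm} yields, for each active layer, the single equation \eqref{eq:Pcphi} in the discharge rate $x_{i,k}=e_{i,k}/l_{i,k}$; together with the common $\lambda_{m,k}$ enforced by \eqref{eq:de_ltm}, two or more active layers give an overdetermined system. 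For $\phi_k^*=\tau$, $\omega_{\phi_k}$ re-enters as a second free multiplier and one is back to the two-unknown situation of Theorem~\ref{lemma:TM_SF_ideal}, hence at most two layers. Your exchange argument and LP-vertex count express the same dimension drop more geometrically, which is a pleasant alternative.

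There is, however, a concrete error in your $0<\phi_k^*<\tau$ perturbation. You balance the exchange by keeping \emph{load} energy fixed, taking $\delta=\epsilon(P_{i,k}^*+P_C)/(P_{j,k}^*+P_C)$, and read off the density $q_i\mathcal{G}(h_iP_{i,k}^*)/(P_{i,k}^*+P_C)$. With $r>0$ this is the wrong resource: during transmission, partition $i$ drains the battery at rate $d_{i,k}$ and (because the non-transmission phase shortens) forgoes charging at rate $\mathcal{F}_c(V_{a_k}^*)$, while delivering only $P_{i,k}^*+P_C=U_k+\mathcal{F}_d(d_{i,k})$ to the load. A feasibility-preserving exchange must satisfy $\epsilon\bigl(d_{i,k}+\mathcal{F}_c(V_{a_k}^*)\bigr)=\delta\bigl(d_{j,k}+\mathcal{F}_c(V_{a_k}^*)\bigr)$, so the relevant density is
\[
\frac{q_i\,\mathcal{G}\bigl(h_i(U_k+\mathcal{F}_d(d_{i,k})-P_C)\bigr)}{d_{i,k}+\mathcal{F}_c(V_{a_k}^*)}.
\]
Maximizing this over $d_{i,k}$ yields exactly the paper's \eqref{eq:Pcphi}, and equality of the maximal densities across two distinct layers is precisely the overdetermination the paper invokes. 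So your plan can be completed once the density is corrected, but as written the perturbation need not remain feasible and the stated density is wrong; you flagged this as the main obstacle, and the fix is the one above rather than the $(P+P_C)$-denominator you used.
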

\begin{proof}
	See Appendix C. 
\end{proof}

\begin{algorithm}[t]
	\caption{ {An algorithm to compute the optimal solution to $\mathrm{P_{LTM}}$ in \eqref{eq:p_ltm} when $K=1$.}}
	\label{algo:LTM_SF}
	\begin{algorithmic}[1]
		\Procedure{LTM-SF}{\textbf{$\tau,\{s_i\}_1^N,\{p_i\}_1^N, P_C, U_k,B_0, B_{\mathrm{max}}$}}
		\State Compute $\tilde{x}_{i,k}$ based on \eqref{eq:Pcphi} in Appendix C. 
		\State 	Find $\tilde{i}=\argmax_{i\in \{1,\ldots,N\}} \left(q_il_{i,k}\log(1+h_i(U_k-P_C+\mathcal{F}_d(\tilde{x}_{i,k}))) \right)$. $\alpha_{2_{i,k}}^*=1$.  
		\State Compute $\tilde{l}_{i,k}=\min\left(B_{\mathrm{max}}/\mathcal{F}_c(V_{a_k}^*),\left({B_0+\tau \mathcal{F}_c(V_{a_k}^*)}\right)/\left({\tilde{x}_{i,k}+\mathcal{F}_c(V_{a_k}^*)}\right)\right)$. 
		\If {$\tilde{l}_{\tilde{i},k}<1$} 
		\State  $\phi_k^*=l_{i,k}^*=\tilde{l}_{\tilde{i},k}$, $d_{i,k}^*=\tilde{x}_{\tilde{i},k}/l_{i,k}^*$, for $i=\tilde{i}$; $l_{i,k}^*=d_{i,k}^*=0$,
		for $i\neq \tilde{i}$. \label{Step:phi_k} 
		\Else $\;$ for each $({i,j})$ pair, compute the unique $\lambda_{m,k}$  that solves \eqref{eq:phi1} and denote it by $\lambda_k^{(i,j)}$. 
		\State $\tilde{d}_{m,k}=g_{m,k}(\lambda_k^{(i,j)})$, $\tilde{l}_{i,k}=\max\left(\min\left(\left(B_0-d_{j,k}\right)/\left(d_{i,k}-d_{j,k}\right),\tau,B_{\mathrm{max}}/\mathcal{F}_c(V_{a_k}^*)\right),0\right)$ 
		 \State for $m\in\{i,j\}$, $\tilde{l}_{j,k} =\tau-\tilde{l}_{i,k}$.
		\State Search for $(i^*,j^*)$ pair that maximizes the average rate in \eqref{eq:avgRate_LTM}. 
		\IIf {$\tilde{d}_{i^*,k}\tilde{l}_{i^*,k}>B_0$} obtain ${l}_{i^*,k}$ from \eqref{eq:lamPC} and denote it by $\tilde{l}_{i^*,k}$.
		\State $l_{i^*,k}^*= \min\left(\tilde{l}_{i^*,k},B_{\mathrm{max}}/\mathcal{F}_c(V_{a_k}^*)\right)$, $l^*_{j^*,k} =\tau-l^*_{i^*,k}$, $e_{i^*,k}^*=B_0, e^*_{j^*,k}=0$.  
		\IElse{} $\; \; l^*_{m,k}=\tilde{l}_{m,k}, d^*_{m,k}=\tilde{d}_{m,k}$ for $m=i^*,j^*$ 	\EndIIf 		\EndIf
		\State Output $l^*_{i,k}, \beta_{i,k}^*,e_{i,k}^*, \phi_k^*$ for $k=1$ and $i=1,\ldots,N$. 
		\EndProcedure
	\end{algorithmic}
\end{algorithm}
We make the following observations based on Theorem \ref{lemma:LTM_SF_non_ideal}. When $r=0$, for any finite $B_0$, $U_k$ and $P_C$, $\phi_k^*$ is strictly greater than zero. However, if $r>0$, we can have $\phi_k^*=0$ and no transmission takes place. Further, when $0<\phi_k^*<\tau$, information is transmitted in only one layer and consequently, performance using the LTM strategy is the same as the performance using the fixed rate transmission. When $\phi_k^*=\tau$, the solution is obtained based on Theorem \ref{lemma:TM_SF_ideal}. As in the ideal case, where  $P_C=r=0$, the computational complexity of Algorithm \ref{algo:LTM_SF} is $\mathcal{O}(N^2)$. 

\subsection{Multi-Frame Case}
When $B_{\mathrm{max}}=\infty$ or $r=0$,  $\mathrm{P_{LTM}}$ in \eqref{eq:p_ltm} is convex and it can be solved numerically for arbitrary K. In the sequel, we obtain the optimal solution for finite $B_{\mathrm{max}}$ and $r\geq0$ for $K=2$ in which case $\mathrm{P_{LTM}}$ in \eqref{eq:p_ltm} is non-convex. Let $\tilde{l}_{i,j}$'s, $\tilde{d}_{i,j}$'s, $\tilde{V}_{i,j}$'s and $\tilde{\phi}_j$'s be the optimal solution to $\mathrm{P_{LTM}}$ in \eqref{eq:p_ltm} with $B_{\mathrm{max}}=\infty$. In the optimal solution, the amount of energy transferred from the first frame to the second frame is given by
\begin{align}
B_1=B_0-\left(\sum_{i=1}^{N}\tilde{l}_{i,1}\tilde{d}_{i,1}-\sum_{i=1}^{N}\tilde{l}_{i,1}\mathcal{F}_c(\tilde{V}_{i,1})-(\tau-\tilde{\phi}_1) \mathcal{F}_c({V}_{a_1}^*)\right)
\end{align} 
If $B_1\leq B_{\mathrm{max}}$, \eqref{eq:p_ltm_c2} is not violated. Hence, $\tilde{l}_{i,j}$'s, $\tilde{d}_{i,j}$'s, $\tilde{V}_{i,j}$'s and $\tilde{\phi}_j$'s are optimal even for the finite $B_{\mathrm{max}}$. However, when $B_1> B_{\mathrm{max}}$, \eqref{eq:p_ltm_c2} gets violated. To account for the finite capacity of the battery in this case, we note that the rate in any frame is a concave increasing function of the initial energy in the battery. Hence, it is optimal to transfer energy from the first frame to the second frame until the battery capacity constraint is satisfied with equality, i.e., the optimal solution is obtained by solving two single frame problems - first with $B_0'=B_0-B_{\mathrm{max}}$ and second with $B_1'=B_{\mathrm{max}}$ as the initial battery energy amounts.
We present the algorithm for this case in Algorithm \ref{algo:LTM_kon_ideal}. 
\begin{algorithm}[t]
	\caption{ {An algorithm to compute the optimal solution to $\mathrm{P_{0-LTM}}$ in \eqref{eq:p0_ltm} for $K=2$}}
	\label{algo:LTM_kon_ideal}
	\begin{algorithmic}[1]
		\Procedure{LTM-non-ideal}{\textbf{$U_k,B_0, B_{\mathrm{max}},P_C,\tau,\{s_i\}_1^N,\{p_i\}_1^N,N$}}
		\State Solve $\mathrm{P_{LTM}}$ in \eqref{eq:p_ltm} and obtain $B_1$ in the optimal solution. \label{Step:infBat}
		\If {$B_1\leq B_{\mathrm{max}}$} the solution in Step \ref{Step:infBat} is optimal. 
		\Else $ $ Solve with $B_0'=B_0-B_{\mathrm{max}}$ and $B_1'=B_{\mathrm{max}}$ in each frame independently. \label{Step:finBat}
		\EndIf
		\State Output $l^*_{i,k}, \beta_{i,k}^*,e_{i,k}^*, \phi_k^*$ for $k=1,2$ and $i=1,\ldots,N$. 
		\EndProcedure
	\end{algorithmic}
\end{algorithm}
In Algorithm \ref{algo:LTM_kon_ideal}, since Step \ref{Step:infBat} and Step \ref{Step:finBat} can be solved with polynomial complexity in the worst case, we conclude that the computational complexity of Algorithm \ref{algo:LTM_kon_ideal} is polynomial. Algorithm \ref{algo:LTM_kon_ideal} is used to propose a suboptimal Algorithm later.

In the next section, we present the LSC strategy and obtain solutions in the offline case. 
\section{LSC Strategy}\label{sec:LSC_strategy}
As in the LTM strategy, in LSC, the message is coded in $N$ layers and layer $i$ is transmitted with power $P_{i,k}$. The layers are superimposed on one another, i.e., the transmission symbol at any time is the summation of the symbols of all layers.  At the receiver, signals in the higher layers act as the interference for decoding the lower layers, hence the number of bits transmitted in layer $i$  of frame $k$ over the time duration $\phi_k$ is given by,
\begin{align}\label{eq:rate_lsc}
R_{i,k}=\phi_k\mathcal{G}\left(\frac{h_iP_{i,k}}{1+h_i\sum_{j=i+1}^{N}P_{j,k}}\right)
\end{align} 
In any frame $k$, $\text{prob}[H = h_j] = p_j$ and the achievable rate when $H = h_j$ is $\sum_{i=1}^{j}R_{i,k}$ bits/frame \cite{Goldsmith}. Hence, the average achievable rate over the channel is $\sum_{i=1}^{N}q_iR_{i,k}$, where $q_i=\sum_{j=i}^{N}p_j$. 
Without  loss of generality, we assume $P_{i,k}$'s and $P_k=\Sigma_{i=1}^{N}P_{i,k}$ remain constant over the frame. 
The total amount of energy available and consumed at the transmitter in any frame $k$ are $\phi_k\left(\alpha_{b_k}U_k+\mathcal{F}_d(d_k)\right)$ and $\phi_k\left(\sum_{i=1}^{N}P_{i,k}+P_C\right)$, respectively.  
The amount of energy stored in and drawn from the battery in frame $k$ are $(\tau-\phi_k) \mathcal{F}_c(V_{a_k}^*)+\phi_k\mathcal{F}_c(V_{k})$ and $\phi_k d_{k}$, respectively, where $V_{k}=(1-\alpha_{b_{k}})U_k$. Hence, the energy causality constraint at the transmitter and at the battery in any frame $k$ are respectively given by, 
\begin{align}
&\phi_k \left(\sum_{i=1}^{N}P_{i,k}+P_C-\alpha_{b_k} U_k-\mathcal{F}_d(d_k)\right)\leq 0 \label{eq:p0_lsc_c1} \\
&\sum_{j=1}^{k}\left(d_j\phi_j-\phi_j\mathcal{F}_c(V_{j})-(\tau-\phi_j) \mathcal{F}_c(V_{a_j}^*)\right)-B_0\leq 0  \label{eq:p0_lsc_c2} 
\end{align}
for $k=1,\ldots,K$, and the battery capacity constraint is given by, 
\begin{align}
	B_0-\sum_{j=1}^{k}\left(d_j\phi_j-\phi_j\mathcal{F}_c(V_{j})-(\tau-\phi_j) \mathcal{F}_c(V_{a_j}^*)\right)-B_{\mathrm{max}}\leq 0,\;\; \text{for}\;\; k=1,\ldots,K.  \label{eq:p0_lsc_c3} 
\end{align}
Hence, to maximize the sum rate over $K$ frames, $\mathrm{P_{\mathrm{gen}}}$ in \eqref{eq:opt_gen} can be reformulated as, 
\begin{subequations}\label{eq:p0_lsc}
	\begin{alignat}{2}
  \mathrm{P_{0-LSC}}: \; \underset{\{P_{i,k},\alpha_{b_{k}}, d_{k},\phi_k\}}{\text{maximize}}&\;\sum_{k=1}^{K}\sum_{i=1}^{N}q_iR_{i,k}\quad \text{s.t. }\\
	& \eqref{eq:p0_lsc_c1},\eqref{eq:p0_lsc_c2}, \eqref{eq:p0_lsc_c3}, P_{i,k}\geq 0,\; 0\leq \alpha_{b_k} \leq 1, \; 0\leq \phi_k \leq \tau,\; d_k \geq 0 \label{eq:p0_lsc_c4}
	\end{alignat}
\end{subequations}
for $k=1,\ldots,K$, $i=1,\ldots,N$, where \eqref{eq:p0_lsc_c1} and \eqref{eq:p0_lsc_c2} are the energy causality constraints and \eqref{eq:p0_lsc_c3} is the battery capacity constraint. 

Due to non-convexity of $R_{i,k}$'s, \eqref{eq:p0_lsc_c1} and \eqref{eq:p0_lsc_c2},  $\mathrm{P_{0-LSC}}$ in \eqref{eq:p0_lsc} is non-convex.  
We now transform $\mathrm{P_{0-LSC}}$ into a convex problem. 
From \eqref{eq:rate_lsc},   $\sum_{i=1}^{N}P_{i,k}=\sum_{i=1}^{N}s_i\mathcal{G}^{-1}(R_{i,k}/\phi_k)\prod_{l=1}^{i-1} \left(\mathcal{G}^{-1} \left(R_{l,k}/\phi_k\right)+1\right)$, where $s_i=1/h_i$, $s_{N+1} \overset{}{=} 0$ and  $\mathcal{G}^{-1}(x)$ is a positive, convex strictly increasing function of $x$. Since the product of non-decreasing, positive convex functions is convex \cite{Boyd}, $P_{i,k}$'s are convex functions of $R_{i,k}$'s.  
Defining $\beta_k=\alpha_{b_k}\phi_k$ and  $e_k=d_k\phi_k$, $\mathrm{P_{0-LSC}}$ in \eqref{eq:p0_lsc} can be transformed to,
\begin{subequations}\label{eq:p_lsc}
	\begin{alignat}{2}
    \mathrm{P_{LSC}}: &\; \underset{\{R_{i,k},\beta_{k}, e_{k},\phi_k\}}{\text{minimize}}\;\;-\sum_{k=1}^{K}\sum_{i=1}^{N}R_{i,k}q_i \quad \text{s.t. }\\
    & \phi_k\left( \sum_{i=1}^{N}s_i\mathcal{G}^{-1}(R_{i,k}/\phi_k)\prod_{l=1}^{i-1} \left(\mathcal{G}^{-1} \left(R_{l,k}/\phi_k\right)+1\right) +P_C -\frac{\beta_kU_k}{\phi_k} -\mathcal{F}_d\left(\frac{e_k}{\phi_k}\right)\right)\leq 0 \label{eq:p_lsc_c1} \\
	&\sum_{j=1}^{k}\left( e_j-\phi_j\mathcal{F}_c(V_{j})-B_0-(\tau-\phi_j) \mathcal{F}_c(V_{a_j}^*)\right)\leq 0 \label{eq:p_lsc_c2} \\
	&B_0+\sum_{j=1}^{k} \left(\phi_j\mathcal{F}_c(V_{j})+(\tau-\phi_j) \mathcal{F}_c(V_{a_j}^*)-e_j\right)-B_{\mathrm{max}}\leq 0 \label{eq:p_lsc_c3} \\
	&R_{i,k}, e_k\geq 0,\; 0\leq \beta_{k} \leq \phi_k,\; 0\leq \phi_k \leq \tau\label{eq:p_lsc_c4}
	\end{alignat}
\end{subequations}
for $k=1,\ldots,K$ and $i=1,\ldots,N$, where all the constraints are self explanatory. 
Noting that the perspective of a convex function is convex, we conclude that \eqref{eq:p_lsc_c1}, \eqref{eq:p_lsc_c2} and \eqref{eq:p_lsc_c3} are convex, convex and concave functions, respectively. $\mathrm{P_{LSC}}$ in \eqref{eq:p_lsc} is non-convex due to concavity of \eqref{eq:p_lsc_c2}. When  $B_{\mathrm{max}}=\infty$, \eqref{eq:p_lsc_c2} becomes inactive and $\mathrm{P_{LSC}}$ is will be convex. In the sequel, we solve $\mathrm{P_{LSC}}$ in \eqref{eq:p_lsc} for  $B_{\mathrm{max}}=\infty$ using KKT conditions, based on which we obtain the solution for arbitrary  $B_{\mathrm{max}}$ for $K=1,2$.  For concreteness, we assume $\mathcal{G}(x)=\log(1+x)$ in the rest of the section. 
We present the  Lagrangian of $\mathrm{P_{LSC}}$ in \eqref{eq:p_lsc} for $B_{\mathrm{max}}=\infty$ and necessary derivatives in Appendix D. 
Based on \eqref{eq:dR_lsc_sf} -- \eqref{eq:db_lsc_sf}, we now solve  $\mathrm{P_{LSC}}$ in \eqref{eq:p_lsc} under various cases. 

\subsection{Single Frame Case}
We consider the ideal and non-ideal cases separately, for $K=1$. 

\subsubsection{Ideal Case, $P_C=r=0$}
In this case, clearly, $\phi_k^*=\tau$, $\mathcal{F}_c(x)=x$ and $\mathcal{F}_d(y)=y$. In frame $k$, since the harvested energy is not stored in the battery, we have, $\alpha_{b_k}=1$ and $\beta_k=\tau$. From \eqref{eq:p_lsc_c2}, we have, $e_k=B_0$. Hence, $\beta_{k}$ and $e_k$ are no longer the optimization variables. The solution to $\mathrm{P_{LSC}}$ depends only on $P_k=(\beta_kU_k+e_k)/\tau$. 
Let $\lambda_k$ and $\mu_{i,k}$ be the non-negative Lagrange multipliers corresponding to \eqref{eq:p_lsc_c1} and the constraint $R_{i,k}\geq 0$ in \eqref{eq:p_lsc_c4}.  
Now, from \eqref{eq:dR_lsc_sf}, for any $i$,  ${\partial \mathrm{L_{LSC}}}/{\partial R_{i,k}}-{\partial\mathrm{L_{LSC}}}/{\partial R_{i+1,k}}=0$ implies, 
\begin{align}\label{eq:rate}
\lambda_k \exp\left(\sum_{j=1}^{i}R_{j,k}/\tau\right)=\frac{p_i}{s_i-s_{i+1}} +\frac{\mu_{i,k}-\mu_{i+1,k}}{s_i-s_{i+1}},  \quad \text{for} \;\;i=1,\ldots,N. 
\end{align}
 where we note $q_{i+1}-q_i=p_i$ and defined $\mu_{N+1,k}=0$.  Further, the complementary slackness condition requires $\mu_{i,k}R_{i,k}=0$. Hence, whenever $R_{i,k}>0$, we must have,  $\mu_{i,k}=0$.  
Now, we note that it may not be optimal to allocate the power to all the layers. To see this, assume that $R_{i,k}>0$ for $i=1,\ldots,N$. Then, due to complementary slackness condition, we must  have, $\mu_{i,k}=0,\; i=1,\ldots,N$. Since $R_{i,k}$'s are strictly positive, the left-hand side of \eqref{eq:rate} must increase with $i$. However, the right-hand side (RHS),  $p_i/(s_i-s_{i+1})$, that depends only on the channel statistics, may not always increase with $i$. This contradicts our assumption that $R_{i,k}>0$ for  $i=1,\ldots,N$, if the RHS is not increasing with $i$. 
Hence, in the following, we identify the \emph{active} layers,  the layers that are used, provided the power constraints are not violated. 

\paragraph{The Identification of the Active Layers} \label{sec:finding the active layers}
Let $\mathcal{A}$ be the set of active layers with elements arranged in ascending order of channel gains. To find $\mathcal{A}$, we adopt the technique proposed in \cite{Diggavi}.
From \eqref{eq:rate}, if all the layers are active, then $\mathcal{A}=\{1,\ldots,N\}$. 
If ${p_j}/\left({s_j-s_{j+1}}\right) \leq {p_{j-1}}/\left({s_{j-1}-s_{j}}\right)$ for some $j\in \mathcal{A}$, then we must have $R_{j,k} \leq 0$ in order to satisfy \eqref{eq:rate}. Since $R_{j,k}$ cannot be negative, we must have, $R_{j,k}=0$. We then remove layer $j$ from $\mathcal{A}$ and update the distribution by assigning $\tilde{p}_{j-1}=p_{j-1}+p_j$ as the probability mass of $h_{j-1}$. We continue to merge the layers until ${\tilde{p}_i}/\left({s_i-s_{i+1}}\right)$ is strictly increasing with $i\in \mathcal{A}$. 

\paragraph{Rate and Power Allocation Among the Active Layers}
Let $A=|\mathcal{A}|$ be the number of active layers, indexed by $a_1,\ldots, a_A$. 
Note that $a_1$ must be $1$ and that  $h_{a_k}$ has the probability mass $\tilde{p}_{a_k}=\Sigma_{i=a_k}^{a_{k+1}-1}p_i$ for any $1\leq k \leq A$.  
Among the active layers, we have 
\begin{align}\label{eq:active_layers}
\frac{\tilde{p}_{a_1}}{s_{a_1}-s_{a_2}} <  \frac{\tilde{p}_{a_2}}{s_{a_2}-s_{a_3}} < \ldots < \frac{\tilde{p}_{a_{A-1}}}{s_{a_{A-1}}-s_{a_A}} <  \frac{\tilde{p}_{a_A}}{s_{a_A}}
\end{align}
In the optimal solution, we make the following observation.
\begin{theorem}\label{thm:LSC_SF_ideal}
	Among the active layers, power is allocated first to layer $a_A$, followed by the consecutive lower layers. The optimal power allocated to layer $a_l, \; l=1,\ldots,A$, is given by
\begin{equation} \label{eq:optLSC}
P_{a_l,k}^*= \left\{ \,
\begin{IEEEeqnarraybox}[][c]{l?s}
\IEEEstrut
P^{\mathrm{max}}_{a_l}   & if $P^{\mathrm{max}}_{a_l} \leq P_k-\sum_{j=l+1}^{A}P^{\mathrm{max}}_{a_j}$,\\
P_k-\sum_{j=l+1}^{A}P^{\mathrm{max}}_{a_j}   & otherwise.
\IEEEstrut
\end{IEEEeqnarraybox}
\right.
\end{equation}
where $P^{\mathrm{max}}_{a_1}=\infty$ and $P^{\mathrm{max}}_{a_l}$ is given by \eqref{eq:pmax} for $l=2,\ldots,A$. 
\end{theorem}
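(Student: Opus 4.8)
The plan is to strip $\mathrm{P_{LSC}}$ down to a separable ``fill from the top'' allocation and then read off \eqref{eq:optLSC}. First I would specialize \eqref{eq:p_lsc} to $K=1$, $B_{\mathrm{max}}=\infty$, $P_C=r=0$: as already noted, $\phi_k^{\ast}=\tau$, $\mathcal{F}_c(x)=x$, $\mathcal{F}_d(y)=y$, $\alpha_{b_k}=1$, $\beta_k=\tau$ and $e_k=B_0$ are all forced, so the only remaining freedom is the split of the \emph{fixed} total power $P_k=(\beta_kU_k+e_k)/\tau$ over the layers. Since any slack can always be handed to the lowest active layer $a_1$ (whose interference, coming only from higher layers, is unaffected) while leaving every other rate unchanged, the transmit-power constraint \eqref{eq:p_lsc_c1} is tight, and the problem reduces to: maximize $\sum_{l=1}^{A}q_{a_l}R_{a_l,k}$ over the active layers $\mathcal{A}=\{a_1,\dots,a_A\}$ subject to $\sum_{l=1}^{A}P_{a_l,k}=P_k$, the non-active layers carrying no power by the identification of $\mathcal{A}$ established above.

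Next I would reparametrize by the cumulative transmit powers $T_m:=\sum_{l\ge m}P_{a_l,k}$, so that $T_1=P_k$, $T_{A+1}=0$, $P_k=T_1\ge T_2\ge\cdots\ge T_A\ge 0$ and $P_{a_l,k}=T_l-T_{l+1}$. With $\mathcal{G}(x)=\log(1+x)$ the superposition rate \eqref{eq:rate_lsc} gives $\exp(R_{a_m,k}/\tau)=(s_{a_m}+T_m)/(s_{a_m}+T_{m+1})$, so the weighted sum rate equals $\tau\sum_{m=1}^{A}q_{a_m}\left(\log(s_{a_m}+T_m)-\log(s_{a_m}+T_{m+1})\right)$. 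A summation-by-parts rearrangement — in which $T_1$ and $T_{A+1}$ are held fixed — turns this into a constant plus $\sum_{m=2}^{A}g_m(T_m)$, where $g_m(T):=q_{a_m}\log(s_{a_m}+T)-q_{a_{m-1}}\log(s_{a_{m-1}}+T)$; the crucial gain is that the objective is now \emph{separable} across $T_2,\dots,T_A$. A short calculus check shows each $g_m$ is unimodal — strictly increasing on $[0,\widehat T_m]$ and strictly decreasing afterwards — with unique stationary point $\widehat T_m$ satisfying $s_{a_m}+\widehat T_m=q_{a_m}(s_{a_{m-1}}-s_{a_m})/(q_{a_{m-1}}-q_{a_m})$.

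The main obstacle, and the heart of the argument, is to show $\widehat T_2\ge\widehat T_3\ge\cdots\ge\widehat T_A\ge\widehat T_{A+1}:=0$, i.e.\ that the per-coordinate unconstrained maximizer of $\sum_{m\ge2}g_m(T_m)$ \emph{already} obeys the monotone chain $T_2\ge\cdots\ge T_A$. This is precisely where the strict active-layer ordering \eqref{eq:active_layers} is indispensable: writing $\sigma_m:=\tilde{p}_{a_m}/(s_{a_m}-s_{a_{m+1}})$ with the convention $s_{a_{A+1}}:=0$, and using $q_{a_{m-1}}-q_{a_m}=\tilde{p}_{a_{m-1}}$, one computes $\widehat T_m-\widehat T_{m+1}=q_{a_m}\left(1/\sigma_{m-1}-1/\sigma_m\right)$, which is nonnegative exactly because $\sigma_{m-1}<\sigma_m$ is the statement of \eqref{eq:active_layers}; the endpoint $\widehat T_A\ge 0$ follows identically from $\sigma_{A-1}<\sigma_A=q_{a_A}/s_{a_A}$. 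The rest is routine boundary bookkeeping.

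With the monotone chain thus rendered inactive, the only constraint still biting on $(T_2,\dots,T_A)$ is the single cap $T_2\le T_1=P_k$, so by separability and the shape of the $g_m$ the constrained optimum is simply $T_m^{\ast}=\min(\widehat T_m,P_k)$ for $m=2,\dots,A$ (truncating at the common cap preserves the decreasing order, so the truncated point is feasible for the chain and hence optimal). Translating back through $P_{a_l,k}^{\ast}=T_l^{\ast}-T_{l+1}^{\ast}$ and setting $P^{\mathrm{max}}_{a_l}:=\widehat T_l-\widehat T_{l+1}$ for $l=2,\dots,A$ and $P^{\mathrm{max}}_{a_1}:=\infty$ yields exactly \eqref{eq:optLSC}: as $P_k$ grows, layer $a_A$ is the first to receive power and saturates at $P^{\mathrm{max}}_{a_A}$, then $a_{A-1}$ saturates at $P^{\mathrm{max}}_{a_{A-1}}$, and so on downward, with the first layer whose cap cannot be met absorbing the residual $P_k-\sum_{j>l}P^{\mathrm{max}}_{a_j}$ and all lower layers receiving nothing. (The same solution can alternatively be extracted from the KKT stationarity conditions \eqref{eq:dR_lsc_sf}--\eqref{eq:db_lsc_sf}, i.e.\ \eqref{eq:rate}, with $P^{\mathrm{max}}_{a_l}$ as in \eqref{eq:pmax}, but the separable formulation above is the cleaner route.)
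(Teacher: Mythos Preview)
Your argument is correct and takes a genuinely different route from the paper. The paper works directly with the KKT conditions: from the stationarity relation \eqref{eq:rate} and complementary slackness it first shows, by contradiction against the active-layer ordering \eqref{eq:active_layers}, that whenever a layer receives positive power so do all higher active layers; it then solves the resulting chain of equalities $\lambda_k\exp(\sum_{j\le i}R_{j,k}/\tau)=\tilde p_{a_i}/(s_{a_i}-s_{a_{i+1}})$ to extract the caps $P^{\mathrm{max}}_{a_l}$ in the closed form \eqref{eq:pmax}. You instead eliminate the multiplier entirely by passing to the cumulative powers $T_m$ and using a summation-by-parts identity to make the objective \emph{separable}; the layered water-filling then drops out of the unimodality of each $g_m$ together with the monotonicity $\widehat T_2\ge\cdots\ge\widehat T_A\ge 0$, which you correctly identify as the place where \eqref{eq:active_layers} is used (your computation $\widehat T_m-\widehat T_{m+1}=q_{a_m}(1/\sigma_{m-1}-1/\sigma_m)$ is the key step). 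Your approach is more elementary and self-contained---no Lagrangian bookkeeping is needed and the ``fill from the top'' picture is immediate from $T_m^\ast=\min(\widehat T_m,P_k)$---whereas the paper's KKT route plugs directly into the framework already set up for the general multi-frame, non-ideal problem and yields the explicit expression \eqref{eq:pmax} for $P^{\mathrm{max}}_{a_l}$ without further manipulation. Both derivations hinge on exactly the same structural fact, namely that the strict ordering \eqref{eq:active_layers} forces the unconstrained per-layer optima to be automatically compatible with the monotone chain.
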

\begin{proof}
	See Appendix E.
\end{proof}

From Theorem \ref{thm:LSC_SF_ideal}, we note that all the active layers, except $a_1$, have upper limits on the optimal power allocation that depend only on the channel statistics.   Further, the highest layer will be allocated the power first. We demonstrate this solution structure in Fig. \ref{fig:SWF}. The layers are likened to containers. All the containers except the one corresponding to $a_1$ have finite capacities. The containers are arranged as shown in Fig. \ref{fig:SWF}  in a \emph{layered} manner and water (power), with volume $P_k$, is poured into the rightmost container. Note that once any container is filled,  water overflows into the immediate left container thereby emulating \eqref{eq:optLSC} in Theorem \ref{thm:LSC_SF_ideal}. We refer to the algorithm as \emph{layered water-filling} algorithm.  
Though this observation can be made from Theorem 1 in \cite{Diggavi}, it is not mentioned in \cite{Diggavi}. 
We note that, a \emph{cut-off} structure, similar to the layered water-filling structure, has been derived in a transmission completion time minimization problem in a static EH broadcast channel in \cite{BC-cutoff,BC-cutoff-finite} and a  distortion minimization problem in \cite{Goldsmith}. 

\begin{table*}[t]
	\begin{equation}\label{eq:pmax}
	P^{\mathrm{max}}_{a_l} =\left(\frac{\tilde{p}_{a_l}({s}_{a_{l-1}}-{s}_{a_l})}{\tilde{p}_{a_{l-1}}({s}_{a_l}-{s}_{a_{l+1}})}-1\right)\left({s}_{a_l}-{s}_{a_{l+1}}+ \sum_{j=l+1}^{A}\left(({s}_{a_j}-{s}_{a_{j+1}})\exp\left(\sum_{i=l+1}^{j}\log\left(\frac{\tilde{p}_{a_i}({s}_{a_{i-1}}-{s}_{a_i})}{\tilde{p}_{a_{i-1}}({s}_{a_i}-{s}_{a_{i+1}})}\right)   \right)\right)	\right)
	\end{equation}	
	\vspace{-1cm}	
\end{table*}
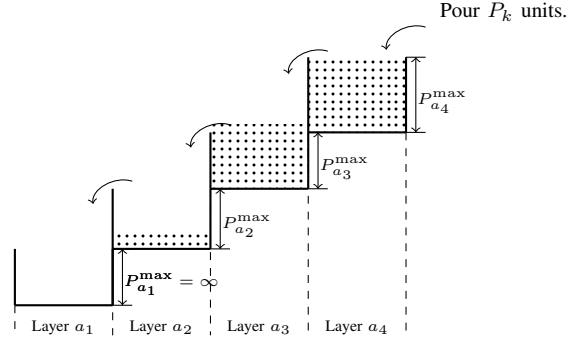
\begin{figure}[t]
	\centering
	\begin{tikzpicture} [xscale=1.3]
	\foreach \x/\lx/\y/\ly/\z in {1/0/0.75/0/1.55,2/1/.8/0.75/1.55, 3/2/.75/1.55/1.73, 4/3/1/2.3/1.3}
	{
		\begin{scope}[shift={(0,\ly)}]
		\draw [thick] (\lx,0) -- (\x,0) ; 
		\draw [thick] (\lx,0) -- (\lx,\y);
		\draw [thick] (\x,0) -- (\x,\y) node [right, yshift=-0.1cm, midway] {\tiny $P_{a_{\x}}^{\mathrm{max}}$};	
		\draw  (\x,0)--(\x+0.2, 0);
		\draw (\x,\y)--(\x+0.2, \y);
		\draw [<->](\x+.1, 0)	-- (\x+0.1, \y);
		\draw[->] (\x+0.2, \z+0.1) to [bend right=60] (\x-0.2, \z-0.2);
		\end{scope}
		\draw [dashed] (\lx, \ly) -- (\lx,-0.4);
		\node at (\x-0.5,-0.3) {\tiny Layer $a_{\x}$};
	}
	\draw [dashed] (4, 2.3) -- (4,-0.4);
	\node at (5,3.9) {\scriptsize  Pour $P_k$ units.  };
	\filldraw [draw=none,pattern=dots](3,2.3) rectangle (4,3.3);
	\filldraw [draw=none,pattern=dots](2,1.55) rectangle (3,2.4);
	\filldraw [draw=none,pattern=dots](1,0.75) rectangle (2,1);
	\draw [thick] (1,0) -- (1,.75) node [right, yshift=-0.1cm, midway] {\tiny $P_{a_{1}}^{\mathrm{max}}=\infty$};	
	\end{tikzpicture}
	\caption{\scriptsize A demonstration of the \emph{layered water-filling} algorithm with four active layers. }
	\label{fig:SWF}
\end{figure}
\subsubsection{Non-ideal Case}
The optimal solution is presented in the following theorem. 
\begin{theorem}\label{thm:LSC_SF_non_ideal}
	Let $\tilde{P}_{i,0}$ be the optimal solution to $\mathrm{P_{LSC}}$ in \eqref{eq:p_lsc} for $B_0=P_C\tau$ and $U_k=0$ for the given $r$.  Then, for any values of $B_0$ and $U_k$, the optimal solution to $\mathrm{P_{LSC}}$ is given by,
	\begin{align}\label{eq:LSC_opt_phi}
	\phi_k^*=\max\left\{\phi:\mathcal{F}_{d}\left(\frac{\min\left((\tau-\phi)\mathcal{F}_c\left(V_{a_k}^*\right)+B_0,B_{\mathrm{max}}\right)}{\phi}\right)+U_k-\sum_{i=1}^{N}\tilde{P}_{i,0}-P_C=0\right\}
	\end{align}
	\begin{equation}\label{eq:lsc_sf_non_ideal} 
	P_{i,k}^*= \left\{ \,
	\begin{IEEEeqnarraybox}[][c]{l?s}
	\IEEEstrut
	0  & if  $\phi_k^*=0$,\\
	\tilde{P}_{i,0}  & if  $0<\phi_k^*<\tau$,\\
	\eqref{eq:optLSC}\;\text{with}\; P_k=U_k+\mathcal{F}_d\left(\frac{B_0}{\tau}\right)-P_C & if $\phi_k^*=\tau^*$ 
	\IEEEstrut
	\end{IEEEeqnarraybox}
	\right.
	\end{equation}
   and 	$\beta_k^*=\phi_k^*$ and $e_k^*=(\tau-\phi_k^*)\mathcal{F}_c(V_{a_k}^*)+B_0$.
\end{theorem}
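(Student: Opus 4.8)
My plan is to reduce $\mathrm{P_{LSC}}$ with $K=1$ to a scalar optimisation over the transmission duration $\phi_k$, solve that, and then read off the layer powers from Theorem~\ref{thm:LSC_SF_ideal}. Since there is no later frame, any charge left in the battery at the end of frame $k$ is wasted; Lemma~\ref{lemma:c-d} forbids charging and discharging at the same instant, so during the transmission phase it never helps to route harvested power into the battery. Hence $\alpha_{b_k}(t)^*=1$, i.e.\ $\beta_k^*=\phi_k^*$ and $V_k=0$. Moreover $\sum_i q_i R_{i,k}$ is increasing in the total transmit power, which is increasing in the discharge rate, so the battery should be drained completely by the end of the frame; accounting for the charging rate $\mathcal{F}_c(V_{a_k}^*)$ over $[0,\tau-\phi_k]$ and the ceiling $B_{\mathrm{max}}$ gives $e_k^*=\min\!\big((\tau-\phi_k^*)\mathcal{F}_c(V_{a_k}^*)+B_0,\,B_{\mathrm{max}}\big)$ and $d_k=e_k^*/\phi_k$. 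Thus, once $\phi_k$ is fixed, the encoder's total power budget is fixed at $\sum_{i=1}^N P_{i,k}=U_k+\mathcal{F}_d(d_k)-P_C$.

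For fixed $\phi_k$ the residual problem — distributing that budget over the $N$ layers to maximise $\sum_i q_i R_{i,k}$ — is exactly the ideal single-frame LSC problem of Theorem~\ref{thm:LSC_SF_ideal} with $P_k=U_k+\mathcal{F}_d(d_k)-P_C$, so its optimiser is the layered water-filling \eqref{eq:optLSC} and its value is $\phi_k\,\Psi\!\big(U_k+\mathcal{F}_d(d_k)-P_C\big)$, where $\Psi(\cdot)$, the induced average rate as a function of total transmit power, is concave and strictly increasing where its argument is positive and vanishes otherwise. It remains to solve $\max_{\phi_k\in[0,\tau]}\phi_k\,\Psi\!\big(U_k+\mathcal{F}_d(d_k(\phi_k))-P_C\big)$. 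I would substitute the discharge rate $d$ for $\phi_k$ (strictly monotone, mapping $(0,\tau]$ onto $[B_0/\tau,\infty)$); on the range where the battery does not overflow this turns the objective into a positive multiple of $h(d)/(d+c)$ with $h(d):=\Psi\!\big(U_k+\mathcal{F}_d(d)-P_C\big)$ concave increasing and $c:=\mathcal{F}_c(V_{a_k}^*)$. Such a ratio is unimodal in $d$ (its derivative has the sign of $h'(d)(d+c)-h(d)$, which is non-increasing in $d$), so the maximiser is either an endpoint of $[0,\tau]$ or the unique interior point at which the tangency/first-order condition holds; this is the structure behind the three lines of \eqref{eq:lsc_sf_non_ideal}.

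Then I would dispatch the three regimes. If no feasible discharge rate makes $U_k+\mathcal{F}_d(d)-P_C$ positive, the objective is identically zero, so $\phi_k^*=0$ and $P_{i,k}^*=0$. If the unconstrained stationary $d$ violates $d\ge B_0/\tau$, the maximiser is $\phi_k^*=\tau$, where $e_k^*=B_0$, $d_k=B_0/\tau$, and $P_{i,k}^*$ is the layered water-filling with $P_k=U_k+\mathcal{F}_d(B_0/\tau)-P_C$. Otherwise $\phi_k^*\in(0,\tau)$; here a perturbation argument (lengthening the transmission phase can only forfeit charge that is being wasted to overflow) shows that no overflow occurs at the optimum, so $e_k^*=(\tau-\phi_k^*)\mathcal{F}_c(V_{a_k}^*)+B_0$, and rewriting the stationarity condition in terms of $\phi_k$ through $d_k=e_k^*/\phi_k$ yields \eqref{eq:LSC_opt_phi}.

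The remaining point — and the step I expect to be the main obstacle — is showing that the total transmit power at the interior stationary point equals $\sum_i\tilde P_{i,0}$ regardless of $U_k$ and $B_0$. Independence of $B_0$ is immediate, since after the change of variable $B_0$ enters only through a positive multiplicative constant and the endpoint $d\ge B_0/\tau$, both irrelevant once the optimum is interior. To pin the level to $\sum_i\tilde P_{i,0}$ I would use the instance $U_k=0$, $B_0=P_C\tau$: there $c=\mathcal{F}_c(V_{a_k}^*)=0$ and $\mathcal{F}_d(P_C)\le P_C$ rules out full-frame transmission, so this instance is always in the interior regime and its optimal total power is $\sum_i\tilde P_{i,0}$ by definition; identifying this total-power level with the one determined by the interior tangency condition for general $(U_k,B_0)$ is the crux of the argument. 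Since the layered water-filling allocation is determined uniquely by its total power, $P_{i,k}^*=\tilde P_{i,0}$ in this regime, and $\beta_k^*=\phi_k^*$ together with $e_k^*=(\tau-\phi_k^*)\mathcal{F}_c(V_{a_k}^*)+B_0$ follow from the reduction in the first step.
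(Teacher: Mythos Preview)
Your reduction to a scalar problem in $\phi_k$ (equivalently in $d$) and the unimodality argument for $h(d)/(d+c)$ are sound, and they yield the three-regime split cleanly. This is a genuinely different route from the paper's proof, which stays with the full KKT system of $\mathrm{P_{LSC}}$: after eliminating $\psi_k$ and $\omega_{\beta_k}$ via \eqref{eq:de_lsc_sf}--\eqref{eq:db_lsc_sf}, the stationarity in $\phi_k$ yields \eqref{eq:phi_lsc_sf}, whose left side is a function only of the ratios $R_{i,k}^*/\phi_k^*$ (equivalently of the $P_{i,k}^*$'s). The paper then evaluates the system at the reference instance $(B_0,U_k)=(P_C\tau,0)$ to name that fixed profile $\tilde P_{i,0}$ and recovers $\phi_k^*$ from the power balance. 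What your approach buys is a transparent geometric picture (tangency of $h$ to a ray through $(-c,0)$); what it loses is precisely the separation that the layer-wise KKT equations provide.

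The step you flag as ``the crux'' is a genuine gap in your argument as written. Your tangency condition reads $\Psi'(P)\,\mathcal{F}_d'(d)(d+c)=\Psi(P)$ together with the power balance $\mathcal{F}_d(d)=P+P_C-U_k$ and $c=\mathcal{F}_c(V_{a_k}^*)$; both the balance and $c$ depend on $U_k$, so nothing you have established isolates $P$ from $U_k$. Anchoring at $(B_0,U_k)=(P_C\tau,0)$ only shows that \emph{that} instance produces total power $\sum_i\tilde P_{i,0}$; it does not show that another $U_k$ lands at the same $P$. In the paper's argument the extra ingredient is the layer-wise stationarity \eqref{eq:dR_lsc_sf}, which ties all the $P_{i,k}^*$'s to the single multiplier $\lambda_k$ and is what makes the left side of \eqref{eq:phi_lsc_sf} depend on the layer powers alone, decoupled from $\phi_k^*$. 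To close the gap along your own route you would have to reintroduce this inner KKT structure (in effect $\Psi'(P)=\lambda_k$ together with \eqref{eq:dR_lsc_sf}), at which point you recover \eqref{eq:phi_lsc_sf} and are back on the paper's path.
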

\begin{proof}
	See Appendix F. 
\end{proof}
We make the following remarks on Theorem \ref{thm:LSC_SF_non_ideal}. Whenever $0<\phi_k^*<\tau$, the allocated $P^*_{i,k}$'s do not depend on the specific value of $\phi_k^*$.  
Whenever $P_C$ is finite and $r=0$, as long as the total energy available in a frame is non-zero, i.e., $B_0+U_k\tau>0$, we can always achieve a non-zero positive average rate. However, when the internal resistance is non-zero, it may be impossible to draw sufficient power to run the circuitry and power allocation may be infeasible. 
Based on the layered water-filling interpretation the optimal power allocation in Theorem \ref{thm:LSC_SF_ideal}, we note that the optimal solution to  $\mathrm{P_{LSC}}$ can be obtained in at most $N$ iterations. Hence, the computational complexity of solving $\mathrm{P_{LSC}}$ in \eqref{eq:p_lsc} for $K=1$ is $\mathcal{O}(N)$.

\subsection{Multi-frame Case}
When $B_{\mathrm{max}}=\infty$ or $r=0$, $\mathrm{P_{LSC}}$ in \eqref{eq:p_lsc} is convex and the problem can be solved for arbitrary $K$. We now solve  $\mathrm{P_{LSC}}$ in \eqref{eq:p_lsc} for arbitrary $B_{\mathrm{max}}$ and $r$ for $K=2$, when it  is non-convex.  
\subsubsection{Ideal Case, $P_C=r=0$}
In this case, we first make the following important observation. 
\begin{lemma}\label{lemma:LSC_MF_ideal}
	The optimal average rate within any frame $k$, denoted by $R_{k}^{(LSC)}(P_k)$, obtained by solving $\mathrm{P_{LSC}}$ in \eqref{eq:p_lsc}, is a concave increasing function of the  uniform transmit power $P_k$.  
\end{lemma}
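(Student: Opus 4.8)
The plan is to write $R_k^{(LSC)}(P_k)$ as the optimal value of a parametric optimization problem in which the parameter $P_k$ appears only as the right–hand side of one constraint whose left–hand side is convex, and then to apply the standard fact that the optimal value of a concave maximization over a convex feasible set is a concave function of such a parameter. First I would specialize $\mathrm{P_{LSC}}$ in \eqref{eq:p_lsc} to $K=1$ with $P_C=r=0$. As already established in the text, in this case $\phi_k^*=\tau$, $\mathcal{F}_c$ and $\mathcal{F}_d$ are the identity, $\alpha_{b_k}=1$, $\beta_k=\tau$ and $e_k=B_0$, so that the harvesting/battery data enter only through the scalar $P_k=(\beta_kU_k+e_k)/\tau=U_k+B_0/\tau$, and the energy–causality constraint \eqref{eq:p_lsc_c1} collapses to the single pooled–power inequality $\sum_{i=1}^{N}P_{i,k}(R_{\cdot,k})\le P_k$. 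Using the $R$–variable formulation, this reads
\begin{equation*}
R_k^{(LSC)}(P_k)=\max\left\{\sum_{i=1}^{N}q_iR_{i,k}\ :\ R_{i,k}\ge0,\ \sum_{i=1}^{N}P_{i,k}(R_{\cdot,k})\le P_k\right\},
\end{equation*}
where $P_{i,k}(R_{\cdot,k})=s_i\mathcal{G}^{-1}(R_{i,k}/\tau)\prod_{l=1}^{i-1}\left(\mathcal{G}^{-1}(R_{l,k}/\tau)+1\right)$, which, as noted just before \eqref{eq:p_lsc}, is a convex function of $(R_{1,k},\dots,R_{N,k})$. The feasible set is nonempty for every $P_k\ge0$ (take $R_{\cdot,k}=0$) and compact (it is closed, and a finite budget $P_k$ bounds each $R_{i,k}$), so the maximum is attained.

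Monotonicity is immediate: if $P_k'\ge P_k$, every feasible point for $P_k$ is feasible for $P_k'$, so $R_k^{(LSC)}$ is nondecreasing. For strict increase, at an optimizer for $P_k$ I would raise $R_{N,k}$ by a small amount; since $N$ is the highest index, this perturbs only the term $P_{N,k}(R_{\cdot,k})$ in the power sum, continuously and starting from the attained value, while raising the objective by $q_N=p_N>0$; hence any strictly larger budget yields a strictly larger value.

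For concavity, fix $P_k^{(1)},P_k^{(2)}\ge0$, $\theta\in(0,1)$, set $P_k^{(\theta)}=\theta P_k^{(1)}+(1-\theta)P_k^{(2)}$, and let $R^{(1)},R^{(2)}$ attain the two maxima. The convex combination $R^{(\theta)}=\theta R^{(1)}+(1-\theta)R^{(2)}$ is nonnegative, and by convexity of the power map $\sum_iP_{i,k}(R^{(\theta)})\le\theta\sum_iP_{i,k}(R^{(1)})+(1-\theta)\sum_iP_{i,k}(R^{(2)})\le P_k^{(\theta)}$, so $R^{(\theta)}$ is feasible for $P_k^{(\theta)}$. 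Since the objective $\sum_iq_iR_{i,k}$ is linear, $R_k^{(LSC)}(P_k^{(\theta)})\ge\sum_iq_iR_{i,k}^{(\theta)}=\theta R_k^{(LSC)}(P_k^{(1)})+(1-\theta)R_k^{(LSC)}(P_k^{(2)})$, which is the claimed concavity on $[0,\infty)$.

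The only genuinely delicate point is the first step, namely that in the ideal single–frame case the problem really does reduce to a single scalar budget together with a \emph{convex} feasible region; this is exactly why one must work in the rate variables $R_{i,k}$ rather than the powers $P_{i,k}$ (in which the rate expressions of \eqref{eq:rate_lsc} are not jointly concave), and it is supplied by the convexity argument preceding \eqref{eq:p_lsc}. Once that reduction is in place the rest is the textbook perturbation–function argument above. An alternative, more computational route would be to differentiate the explicit layered water–filling allocation of Theorem \ref{thm:LSC_SF_ideal}—as $P_k$ grows it fills the containers $a_A,a_{A-1},\dots$ in turn—and verify that the marginal rate $\mathrm{d}R_k^{(LSC)}/\mathrm{d}P_k$ is nonincreasing, decreasing within each filling phase by concavity of $\mathcal{G}$ and not jumping up at the breakpoints by virtue of the ordering \eqref{eq:active_layers} and the form \eqref{eq:pmax} of $P^{\mathrm{max}}_{a_l}$; this works but is messier, so I would present the convexity argument.
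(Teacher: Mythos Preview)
Your proposal is correct and takes essentially the same approach as the paper: reduce the ideal single-frame problem to a convex program with a single scalar budget constraint $\sum_i P_{i,k}(R_{\cdot,k})\le P_k$, then invoke the standard perturbation-function result that the optimal value of a convex program is concave in the right-hand side of its constraints. The paper simply cites this fact (exercise~5.32 in Boyd and Vandenberghe), whereas you spell out the convex-combination argument explicitly and add a strict-monotonicity step; the substance is identical.
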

\begin{proof}
In this case, \eqref{eq:p_lsc_c1} can be re-written as $\sum_{i=1}^{N}P_{i,k}-P_k\leq0$, where $P_k=U_k+B_0/\tau$. 
In any frame $k$, as $P_k$ increases, the constraint \eqref{eq:p_lsc_c1} gets relaxed, or, in other words, the constraint \eqref{eq:p_lsc_c1} is perturbed. Since, the optimal value function of a perturbed problem is convex if the original problem is convex (exercise $5.32$ in \cite{Boyd}), the result follows.  
\end{proof}
Hence, in order to find the optimal energy allocation to each of the frames, we need to solve the following convex optimization problem.
\begin{subequations}\label{eq:p_lsc_mf_ideal}
	\begin{alignat}{2}
	\underset{\{P_k\}}{\text{minimize}}\;\;&-\sum_{k=1}^{K}R_{k}^{(LSC)}(P_k) \\
	\text{s.t. } \;\;& \sum_{j=1}^{k}(P_j-U_j)\tau-B_0\leq 0, \; P_k \geq 0,\; k = 1, \ldots, K \label{eq:p_lsc_mf_ideal_c1}
	\end{alignat}
\end{subequations} 
Note that this problem is a specific case of the general problem solved in \cite{yang}. Based on \cite{yang}, the optimal solution has the following properties. 
\begin{lemma}\label{eq:p_lsc_mf_ideal1}
	Optimal $P_k$'s form a non-decreasing sequence, i.e., $P_1^*\leq P_2^*\leq,\ldots,P_K^*$ and whenever $P_k^*$'s change the value, the entire harvested energy up to that frame is consumed, i.e.,  for any $k$,  $P_k^*<P_{k+1}^*$ implies $\sum_{k=1}^{j}(U_k-P_k^*)\tau-B_0=0$.  
\end{lemma}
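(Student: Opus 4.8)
The plan is to derive both properties from a simple exchange argument built on the concavity of $R_{k}^{(LSC)}(\cdot)$ from Lemma \ref{lemma:LSC_MF_ideal}; this is the specialization to \eqref{eq:p_lsc_mf_ideal} of the general staircase result of \cite{yang}, but it is short enough to give directly. Since the channel statistics are identical across frames, the within-frame optimal-rate map is a single concave increasing function, which I will write as $R^{(LSC)}(\cdot)$, so the objective of \eqref{eq:p_lsc_mf_ideal} is $\sum_{k}R^{(LSC)}(P_k)$. The structural point that drives everything is that the constraints \eqref{eq:p_lsc_mf_ideal_c1} are \emph{cumulative}: transferring a small amount of energy from an earlier frame to a later one only relaxes the intermediate cumulative constraints and leaves all later ones unchanged, whereas transferring energy from a later frame to an earlier one only tightens the intermediate ones. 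This asymmetry is exactly what forces a non-decreasing, energy-exhausting schedule.

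For monotonicity I would argue by contradiction. Suppose an optimal $\{P_k^*\}$ satisfies $P_k^*>P_{k+1}^*\ge 0$ for some $k$ (hence $P_k^*>0$), and replace $(P_k^*,P_{k+1}^*)$ by $(P_k^*-\delta,\,P_{k+1}^*+\delta)$ for small $\delta>0$. Every cumulative sum $\sum_{j=1}^{m}P_j$ with $m\ne k$ is unchanged, the one with $m=k$ decreases, and $P_k^*-\delta\ge 0$ for $\delta$ small, so the perturbed point is feasible. The resulting change in the objective is $R^{(LSC)}(P_k^*-\delta)+R^{(LSC)}(P_{k+1}^*+\delta)-R^{(LSC)}(P_k^*)-R^{(LSC)}(P_{k+1}^*)$, whose first-order term is $\delta\big[(R^{(LSC)})'(P_{k+1}^*)-(R^{(LSC)})'(P_k^*)\big]\ge 0$ by concavity and $P_{k+1}^*<P_k^*$; under strict concavity this is a strict improvement, contradicting optimality. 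When $R^{(LSC)}$ has an affine stretch one instead increases $\delta$ until the two powers coincide, obtaining an equally good non-decreasing solution, and the canonical optimum — the one that postpones consumption maximally — is of the stated form.

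For the exhaustion property, take an optimal non-decreasing $\{P_k^*\}$ with $P_k^*<P_{k+1}^*$ for some $k$ and suppose the $k$-th constraint in \eqref{eq:p_lsc_mf_ideal_c1} were slack. Then the reverse move $(P_k^*,P_{k+1}^*)\mapsto(P_k^*+\delta,\,P_{k+1}^*-\delta)$ is feasible for small $\delta$ (only the $k$-th cumulative sum is affected, and it had slack, while $P_{k+1}^*-\delta\ge 0$ since $P_{k+1}^*>P_k^*\ge 0$), and the first-order objective change is $\delta\big[(R^{(LSC)})'(P_k^*)-(R^{(LSC)})'(P_{k+1}^*)\big]>0$ by (strict) concavity and $P_k^*<P_{k+1}^*$ — again a contradiction, so the $k$-th constraint holds with equality, i.e. the cumulative energy available through frame $k$ is fully consumed. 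Equivalently, both facts fall out of the KKT conditions of the convex program \eqref{eq:p_lsc_mf_ideal}: with multipliers $\nu_k\ge 0$ on \eqref{eq:p_lsc_mf_ideal_c1}, stationarity gives $(R^{(LSC)})'(P_k^*)=\tau\sum_{j=k}^{K}\nu_j$ whenever $P_k^*>0$; the right-hand side is non-increasing in $k$, forcing $P_k^*$ non-decreasing, and $P_k^*<P_{k+1}^*$ forces $\nu_k>0$, i.e. the $k$-th constraint is active.

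The one delicate point — and the main obstacle — is that Lemma \ref{lemma:LSC_MF_ideal} guarantees concavity but not \emph{strict} concavity of $R^{(LSC)}(\cdot)$, so each first-order perturbation only yields a weak inequality. I would close this gap either by checking directly that, for $\mathcal{G}(x)=\log(1+x)$, the layered water-filling solution of Theorem \ref{thm:LSC_SF_ideal} has a strictly decreasing marginal rate whenever at least one layer is active (giving strict concavity on the relevant range), or by simply singling out the latest-consumption optimal solution, for which the asserted staircase structure holds even across affine pieces of $R^{(LSC)}$. Everything else — feasibility of the two perturbations and the sign of the first-order change — is routine.
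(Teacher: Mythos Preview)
Your proposal is correct and follows essentially the same approach as the paper: the paper's proof simply invokes the concavity of the per-frame optimal rate from Lemma~\ref{lemma:LSC_MF_ideal} and defers to Lemmas~1 and~3 of \cite{yang}, whose arguments are precisely the two exchange/perturbation (or, equivalently, KKT) steps you spell out. Your treatment of the strict-versus-weak concavity point is a nice addition that the paper leaves implicit.
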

\begin{proof}
	The result follows due to the concavity of $\mathcal{R}^*(\cdot)$ and it can be proved along the lines of the proofs of Lemma 1 and Lemma 3 in \cite{yang}. 
\end{proof}
From $P_k^*$'s, we can easily compute $\alpha_{b_k}^*$'s and $d_k^*$'s, and $P_{i,k}^*$'s can be found from Theorem \ref{thm:LSC_SF_ideal}. Since the set of active layers depends only on the channel statistics, it needs to be computed only once for the given system. Due to the non-decreasing structure of optimal power levels across the frames and  the layered water-filling structure within a frame, in frame $k$, if any active layer $a_l$ (hence, $a_{l+1}$ to $a_A$) is allocated power up to $P^{\mathrm{max}}_{a_l}$, then layers $a_l$ to  $a_A$ are also allocated power up to their thresholds in frames $k+1,\ldots, K$. Hence, the power needs to be computed only for layers up to $a_l$ and the search space reduces significantly. Further, we can easily account for the finite $B_{\mathrm{max}}$ in \eqref{eq:p_lsc_mf_ideal} as in \cite{ulukusFiniteBat}.
From \cite{yang}, we note that \eqref{eq:p_lsc_mf_ideal} can be solved in $K$ iterations. Hence, the computational complexity in solving $\mathrm{P_{LSC}}$ in \eqref{eq:p_lsc} in the ideal case  is $\mathcal{O}(NK)$. 
\begin{algorithm}[t]
	\caption{ {An algorithm to compute the optimal solution to $\mathrm{P_{LSC}}$ in \eqref{eq:p_lsc} for $K=2$}}
	\label{algo:LSU_kon_ideal}
	\begin{algorithmic}[1]
		\Procedure{LSC-non-ideal}{\textbf{$U_k,B_0, B_{\mathrm{max}},P_C,\tau,\{s_i\}_1^N,\{p_i\}_1^N,N$}}
		\State Obtain $\tilde{P}_{i,0}$ with $B_0=P_C\tau$, $U_k=0$, and $\tilde{\phi}_k$ from \eqref{eq:LSC_opt_phi} for $k=1,2$ as in Theorem \ref{thm:LSC_SF_non_ideal}. \label{Step:opt_phi}
		\If {$\tilde{\phi}_1<\tau$}  solve for $P^*_{i,k}$'s from Theorem \ref{thm:LSC_SF_non_ideal} for $k=1,2$ independently.  
		\ElsIf {$\tilde{\phi}_1=\tau$ and $\tilde{\phi}_2<\tau$} obtain $e_1$ using \eqref{eq:d2e1}.
		\IIf {$e_1>0$} $e_1^*=e_1$ and  obtain $\phi_2^*$ using \eqref{eq:phi2} and $P_1^*=\mathcal{F}_d(e_1^*/\tau)+U_1-P_C$. 
		\IElse{} $\;$ obtain $\beta_1^*$ using \eqref{eq:case3} and $P_1^*=\beta_1^*U_1/\tau$. 	\EndIIf
		\State Solve for $P^*_{i,1}$'s from Theorem \ref{thm:LSC_SF_non_ideal} with $P_1=P_1^*$ and  $P_{i,2}^*=\tilde{P}_{i,0}$, $i=1,\ldots,N$. 
		\ElsIf {$\phi_1=\tau$ and $\phi_2=\tau$} obtain $\beta_1^*$ using \eqref{eq:case3} with $d_2=B_0/\tau+\mathcal{F}_c(V_1)$. 
		\State Solve for $P^*_{i,k}$'s from Theorem \ref{thm:LSC_SF_non_ideal} with $P_1=\beta_1^*U_1/\tau$, $P_2=\mathcal{F}_d(d_2^*)+U_2-P_C$. 
		\EndIf
		\State Output $\phi_k^*$, $\beta_{i,k}^*,e_{i,k}^*,R_{i,k}^*$ for $k=1,2$ and $i=1,\ldots,N$. 
		\EndProcedure
	\end{algorithmic}
\end{algorithm}
\subsubsection{Non-ideal Case, $P_C\geq 0$ and $r\geq 0$}
We solve $\mathrm{P_{LSC}}$ in \eqref{eq:p_lsc} when $K=2$ based on which we propose an online algorithm. 
We have the following result in this case. 
\begin{theorem}\label{thm:LSC_MF_non_ideal}
The Algorithm \ref{algo:LSU_kon_ideal} gives the optimal solution to $\mathrm{P_{LSC}}$ in \eqref{eq:p_lsc} for $K=2$. 
\end{theorem}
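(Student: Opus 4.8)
The plan is to prove optimality of Algorithm~\ref{algo:LSU_kon_ideal} by reducing $\mathrm{P_{LSC}}$ for $K=2$ to a one‑dimensional optimisation over the energy carried from frame~$1$ into frame~$2$, and then identifying each branch of the algorithm with the KKT solution of that scalar problem in the appropriate regime. First I would invoke Lemma~\ref{lemma:c-d} to restrict attention to policies that never charge and discharge the battery simultaneously, and then, exactly as in the multi‑frame LTM analysis, dispose of the non‑convexity introduced by the battery‑capacity constraint \eqref{eq:p_lsc_c3}: since the single‑frame optimal rate is non‑decreasing in the initial battery energy (a corollary of Theorem~\ref{thm:LSC_SF_non_ideal}), whenever \eqref{eq:p_lsc_c3} is active at the end of frame~$1$ it must be active with equality, so the problem collapses into two decoupled single‑frame problems with adjusted initial energies. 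It therefore suffices to treat the case in which \eqref{eq:p_lsc_c3} is slack, where the $K=2$ problem is convex.

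Next I would parametrise by $B_1$, the energy residing in the battery at the start of frame~$2$. Given $B_1$, frame~$2$ contributes the single‑frame optimal value $\mathcal{R}_2(B_1)$, while frame~$1$ contributes the best rate $\hat{\mathcal{R}}_1(B_1)$ achievable with initial energy $B_0$ subject to leaving exactly $B_1$ behind — where frame~$1$ may either discharge the battery or divert part of $U_1$ into it, the latter being relevant precisely when $B_1>B_0$. By Theorem~\ref{thm:LSC_SF_non_ideal} the within‑frame optimum depends only on the scalar transmit power $P_k$ and the transmission duration $\phi_k$, and by the perturbation argument used for Lemma~\ref{lemma:LSC_MF_ideal} (the optimal value of a convex program is concave in a right‑hand‑side perturbation) both $\mathcal{R}_2(\cdot)$ and $\hat{\mathcal{R}}_1(\cdot)$ are concave. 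Hence $\hat{\mathcal{R}}_1(B_1)+\mathcal{R}_2(B_1)$ is concave in the single variable $B_1\in[0,B_{\mathrm{max}}]$, and its maximiser is characterised either by equality of the one‑sided marginal rates of the two frames (a ``double water‑level'' condition) or by a boundary of the feasible interval.

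The core of the argument is then a case analysis keyed, via energy causality, to the two reference durations $\tilde{\phi}_1,\tilde{\phi}_2$ from \eqref{eq:LSC_opt_phi}. If $\tilde{\phi}_1<\tau$, frame~$1$ is in the partial‑transmission regime of Theorem~\ref{thm:LSC_SF_non_ideal}, where its power is pinned to $\sum_i\tilde{P}_{i,0}$ and its battery is emptied at the frame boundary; transferring energy forward is then both unhelpful and, at the optimum, infeasible, so $B_1=0$ and the two frames decouple (first branch). If $\tilde{\phi}_1=\tau$ but $\tilde{\phi}_2<\tau$, frame~$2$ is the starved one and energy is pushed forward until frame~$2$ reaches full occupancy or the frame‑$1$ marginal drops to the constant frame‑$2$ marginal; writing out stationarity yields \eqref{eq:d2e1}, and the sign of the resulting $e_1$ determines whether the optimal transfer is realised by discharging ($e_1>0$, with $\phi_2^*$ from \eqref{eq:phi2} and $P_1^*=\mathcal{F}_d(e_1^*/\tau)+U_1-P_C$) or by diverting harvested power into the battery ($e_1\le 0$, with $\beta_1^*$ from \eqref{eq:case3}) — second branch. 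If $\tilde{\phi}_1=\tilde{\phi}_2=\tau$, both frames transmit fully, marginal‑equality reduces to \eqref{eq:case3} with $d_2=B_0/\tau+\mathcal{F}_c(V_1)$, and the per‑layer powers follow from Theorem~\ref{thm:LSC_SF_non_ideal} — third branch. I would finish by checking that the three cases are exhaustive and that in each the output of Algorithm~\ref{algo:LSU_kon_ideal} satisfies primal feasibility, dual feasibility and complementary slackness for the convexified $K=2$ problem, hence is optimal.

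The step I expect to be the main obstacle is establishing the concavity and marginal characterisation of $\hat{\mathcal{R}}_1(B_1)$ when frame~$1$ may simultaneously harvest, transmit and \emph{charge} the battery: the map from ``energy left in the battery'' to ``power delivered to the load'' is nonlinear and lossy through $\mathcal{F}_c$ and $\mathcal{F}_d$, the transmission‑duration regime of frame~$1$ may itself switch as $B_1$ varies, and the bidirectional‑flow alternatives not already excluded by Lemma~\ref{lemma:c-d} must be ruled out. Carrying out the finite‑$B_{\mathrm{max}}$ reduction carefully, so that the convex‑program perturbation argument applies verbatim, is the other delicate point.
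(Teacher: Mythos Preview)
Your approach is correct in spirit but takes a genuinely different route from the paper. The paper does \emph{not} reduce to a scalar optimisation over $B_1$ or invoke concavity of a value function $\hat{\mathcal{R}}_1(\cdot)$; instead it works directly with the KKT system \eqref{eq:dR_lsc_sf}--\eqref{eq:db_lsc_sf} of the full $K=2$ problem (with $B_{\mathrm{max}}=\infty$) and enumerates the cases $\phi_1^*=\phi_2^*=0$, $0<\phi_1^*,\phi_2^*<\tau$, $\phi_1^*=\tau,\;0<\phi_2^*<\tau$ (split into the sub-cases $e_1>0$ and $e_1=0,\;0<\beta_1\le\tau$), and $\phi_1^*=\phi_2^*=\tau$. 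In each case it substitutes the active complementary-slackness relations into the stationarity conditions and reads off \eqref{eq:d2e1}, \eqref{eq:phi2} and \eqref{eq:case3} directly. Your marginal-rate-equalisation picture is more transparent and explains \emph{why} those equations arise, while the paper's derivation is purely mechanical but avoids having to establish the concavity of $\hat{\mathcal{R}}_1(B_1)$ in the charging regime --- exactly the step you flag as the main obstacle.

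Two places where your plan would need more than you have written. First, your first branch asserts that $\tilde{\phi}_1<\tau$ forces $B_1^*=0$; this is not immediate, because even an energy-poor frame~$1$ could in principle leave energy in the battery if frame~$2$'s marginal gain were larger. The paper sidesteps this by keying its case split to the \emph{optimal} $\phi_k^*$ (not to the single-frame reference $\tilde{\phi}_k$) and showing that when $0<\phi_1^*,\phi_2^*<\tau$ the stationarity conditions force both frames to the common power vector $\tilde{P}_{i,0}$ and hence decouple; you would need an analogous argument comparing marginal rates at $B_1=0$. Second, your finite-$B_{\mathrm{max}}$ reduction mirrors Algorithm~\ref{algo:LTM_kon_ideal} for LTM, but the paper does not do this separately here: it folds $B_{\mathrm{max}}$ into the case analysis (cf.\ \eqref{eq:phi2}) rather than peeling it off up front. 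Your reduction is cleaner, but you should check that saturating \eqref{eq:p_lsc_c3} with equality and splitting into two single-frame problems is indeed optimal, i.e., that the single-frame rate is monotone in the initial battery energy in \emph{all} regimes of Theorem~\ref{thm:LSC_SF_non_ideal}, including $\phi_k^*=\tau$.
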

\begin{proof}
	See Appendix G. 
\end{proof}
Since for any given $d_k$ and $\beta_k$, $R_{j,k}$'s can be obtained from Theorem \ref{thm:LSC_SF_ideal}, we can obtain $e_1$, $\beta_1^*$ in $N$ iterations. Note that $\tilde{P}_{i,0}$ needs to be computed only once for the given system with polynomial complexity. Assuming that $\tilde{P}_{i,0}$'s are known, each of the steps in Algorithm \ref{algo:LSU_kon_ideal} requires at most $N$ iterations. Hence, the computational complexity for the two frame case is  $\mathcal{O}(N)$.  

\section{Online Policies} \label{statistical}
In practice, it would be unrealistic to have the non-causal knowledge of the harvested power, but, it is likely that we have statistical information.
We now present the optimal online policy, a suboptimal online policy inspired by the offline solution, and a greedy policy in this section. 

\subsubsection{Optimal Online Policy}
To obtain the optimal power allocation when only the causal knowledge and the statistical information of the harvested powers are available, we employ the stochastic dynamic programming based approach \cite{dp}. We describe the problem formulation for the LSC strategy only. The similar approach can be used to formulate the problem using the LTM strategy.  
Let $\zeta_k=(U_k,B_{k-1})$ denote the state of the system in frame $k$, where $U_k$ is the harvested power and $B_{k-1}$ is the energy available in the battery at the start of the frame $k$. We assume that the state information of any given frame is known at the start of the frame. Note that $\zeta_1=(U_1,B_0)$ is the initial state of the system. 	Our goal is to maximize the sum rate over a finite horizon of $K$ frames, by choosing a policy, $\pi=\{P_{i,k}(\zeta_k),\phi_k(\zeta_k),\alpha_{b_k}(\zeta_k),d_{k}(\zeta_k),\forall \zeta_k, k=1,\ldots,K,\; i=1,\ldots,N\}$, that selects  power allocation (to each of the layers),  transmission duration,  power splitting ratios and discharge powers for each of the frames. A policy is feasible if the energy causality constraints and  battery capacity constraints specified in \eqref{eq:p0_lsc_c1} -- \eqref{eq:p0_lsc_c3}, are satisfied for possible states in all the frames. Let $\Pi$ denote the set of all feasible policies. Given the initial state $\zeta_1$, the maximum average rate is given by,
$\mathcal{R}_{\mathrm{on}}^*=\max_{\pi\in\Pi}\mathcal{R}_{\mathrm{on}}(\pi)$, 
where 
\begin{align}\label{eq:Ron}
\mathcal{R}_{\mathrm{on}}(\pi)=\sum_{k=1}^{K}\mathbb{E}\left[R(U_k,B_{k-1},P_{i,k}(\zeta_k),\phi_k(\zeta_k),\alpha_{b_k}(\zeta_k),d_{k}(\zeta_k))|\zeta_1,\pi\right]
\end{align}
with $R=\sum_{i=1}^{N}q_iR_{i,k}$, where $R_{i,k}$ is given by \eqref{eq:rate_lsc}. The expectation in \eqref{eq:Ron} is with respect to the random harvested power. 
The maximum average rate, $\mathcal{R}_{\mathrm{on}}^*$ of the system, given by the value function $J_1(\zeta_1)$, can be computed recursively based on Bellman's equations, starting from $J_K(\zeta_K), J_{K-1}(\zeta_{K-1})$, and so on until $J_1(\zeta_1)$ as follows:
	\begin{subequations}\label{eq:dp}
		\begin{alignat}{2}
		&J_K(U_K,B_{K-1})=\max R(U_K,B_{K-1},P_{i,k}(\zeta_k),\phi_k(\zeta_k),\alpha_{b_k}(\zeta_k),d_{k}(\zeta_k)) \label{eq:dpN}\\
		&J_k(U_k,B_{k-1})=\max R(U_k,B_{k-1},P_{i,k}(\zeta_k),\phi_k(\zeta_k),\alpha_{b_k}(\zeta_k),d_{k}(\zeta_k))+\bar{J}_{k+1}(U_{k+1},B_k)\nonumber\\ 
		&\qquad\qquad\qquad\qquad\qquad\qquad\qquad\qquad\text{for}\;k=1,\ldots,K-1, \;i=1,\ldots,N  \label{eq:dpn}
		\end{alignat}
	\end{subequations}
where the maximization in \eqref{eq:dpN} and \eqref{eq:dpn} is over $\{P_{i,k}(\zeta_k),\phi_k(\zeta_k),\alpha_{b_k}(\zeta_k),d_{k}(\zeta_k)\}$ and $\bar{J}_{k+1}(U_{k+1},x)=\mathbb{E}_{U_{k+1}}\left[{J}_{k+1}(U_{k+1},x)\right]$ is the average throughput across frames $k+1$ to $K$ averaged over all the realizations of $U_{k+1}$. Note that in \eqref{eq:dpn}, we account for the fact that $U_i$'s are independent.  Note that the residual energy $B_k$ in \eqref{eq:dpn} is a function of the decision variables $P_{i,k}(\zeta_k),\phi_k(\zeta_k),\alpha_{b_k}(\zeta_k),d_{k}(\zeta_k)$. An optimal policy is denoted as $\pi^*=\{P_{i,k}^*(\zeta_k),\phi_k^*(\zeta_k),\alpha_{b_k}^*(\zeta_k),d_{k}^*(\zeta_k), \forall \zeta_k, k=1,\ldots,K, \;i=1,\ldots,N \}$, where the optimal solution to \eqref{eq:dp} is given by $\{P_{i,k}^*(\zeta_k),\phi_k^*(\zeta_k),\alpha_{b_k}^*(\zeta_k),d_{k}^*(\zeta_k)\}$ when the state of the system is $\zeta_k$. 

\subsubsection{Mean Value Based (MV) Policy}
In addition to the instantaneous knowledge, when we have the statistical information (such as the mean value) of harvested powers, we propose an algorithm for LSC strategy based on Algorithm \ref{thm:LSC_MF_non_ideal}. Let the expected values of the harvested power be $\bar{U}$. Then, MB policy works as follows. 
At the beginning of any frame $k$, we have knowledge of the harvested power $U_k$, residual energy in the battery, $B_{k-1}$. To find $(P_{i,k},\phi_k,\alpha_{b_k},d_{k})$, we consider a hypothetical two-frame optimization problem with the first frame being the frame $k$ and the second frame being a hypothetical frame with harvested power $0.5 \bar{U}$. Then, at the beginning of frame $k$, $k=1,\ldots, K$, the transmitter solves the optimization problem  $\mathrm{P_{LSC}}$ in \eqref{eq:p_lsc} for the above two-frame hypothetical problem.   The residual energy in the battery is considered as the initial energy stored in the battery for the next iteration. The similar algorithm can be obtained for the LTM strategy in which $\mathrm{P_{LTM}}$ in \eqref{eq:p_ltm} is solved for the above two-frame problem. 
\subsubsection{Greedy Algorithm}
When we only have the instantaneous knowledge of the harvested power but not the non-causal or statistical information on the power profile, entire harvested energy in any frame is utilized in the same frame itself. In each of the frames, the corresponding single frame optimization problem is solved based on Algorithm \ref{algo:LTM_SF} and Theorem \ref{thm:LSC_SF_non_ideal},  for LTM and LSC strategies, respectively.

\section{Numerical Results}\label{sec:numerical results}
Based on \cite{IR}, we assume $\mathcal{F}_c(V)=-rV^2/V_B^2+V$ and  $\mathcal{F}_d(d)=-rd^2/V_B^2+d$, where $V_B $ is the nominal voltage of the battery. 
We assume $\mathcal{G}(z)=W\log(1+z/(N_0W))$, where  $z\;\si{\watt}$ is the transmit power, $W=\SI{1}{\mega\hertz}$ is the channel bandwidth and $N_0=1$ \si{\nano\watt/\hertz}.
We assume the power gain $H$ with Gamma distribution: $f_H(h;\,x,y) = y^x h^{x-1}e^{-yh}/\Gamma(x)$, 
where $x$ is the shape parameter, $y$ is the scale parameter and $\Gamma(x)$ is the Gamma function. We truncate  $f_H(h)$ at $h=T$ and quantize $h$ to $N$ evenly spaced levels in $[0,T]$ obtaining $h_i=iT/N$ with probability $p_i=\int_{h=({i-1})T/N}^{i T/N}f_H(h)dh$ for $i=1,\ldots,N-1$, $h_N=T/N$ and  $p_N=\int_{h=({N-1})T/N}^{\infty}f_H(h)dh$. 
 \begin{figure*}[t]
	\centering
	\begin{subfigure}{0.48\textwidth}
		\includegraphics[width=1\textwidth]{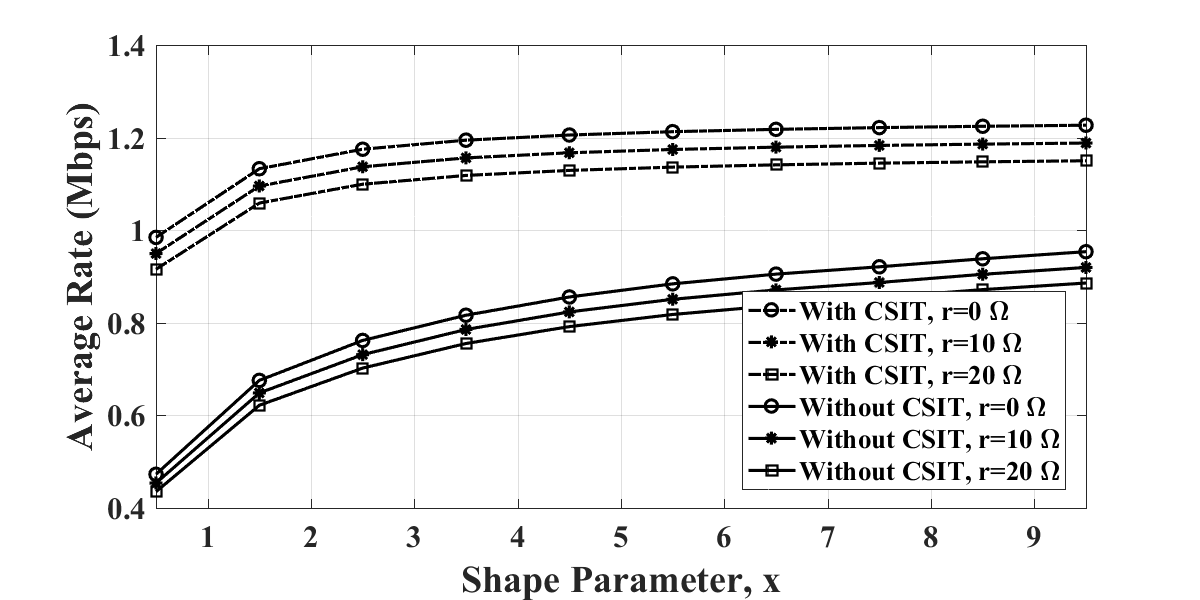}
		\caption{LTM strategy }
		\label{fig:ltm_var_shape}
	\end{subfigure}%
	\begin{subfigure}{0.48\textwidth}
		\includegraphics[width=1\textwidth]{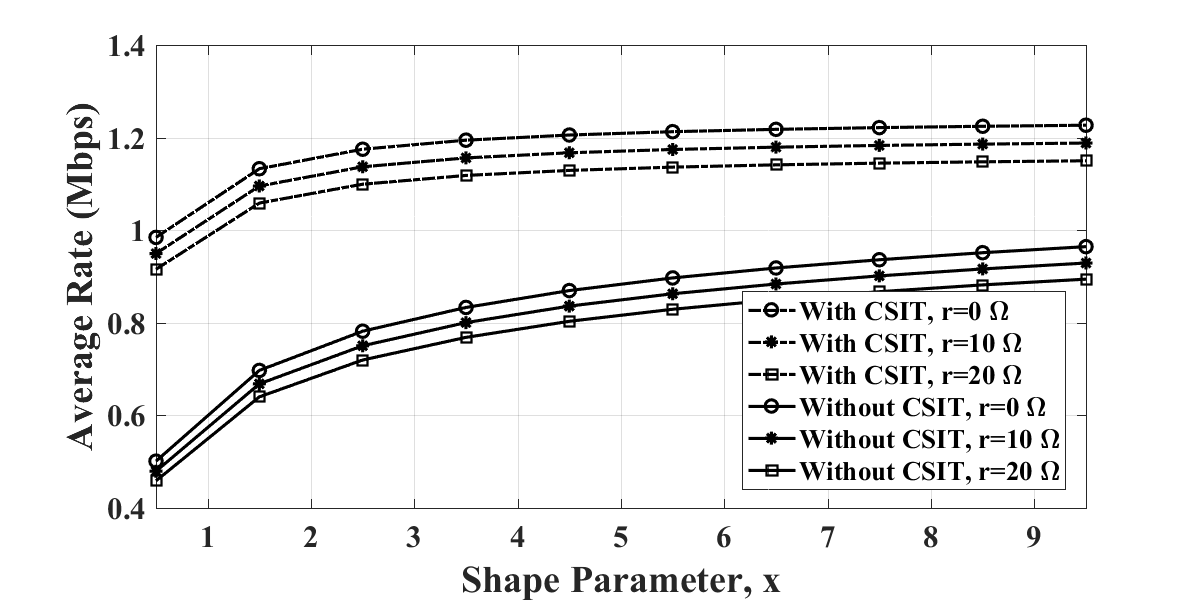}
		\caption{LSC strategy }
		\label{fig:lsc_var_shape}
	\end{subfigure}
	\caption{Variation of optimal average rate with shape parameter, $x$, for $K=1$, $y=1/x$ such that $\sum_{i=1}^{N}h_ip_i=1$, $T=5$, $V_B=1.5$ \si{\volt}, $B_{\mathrm{max}}=30$ \si{\milli\watt} and $U_1=P_C=10$ \si{\milli\watt}.   }
	\label{fig:var_shape}
\end{figure*}

\begin{figure*}[t]
	\centering
	\begin{subfigure}{0.4\textwidth}
		\includegraphics[width=1\textwidth]{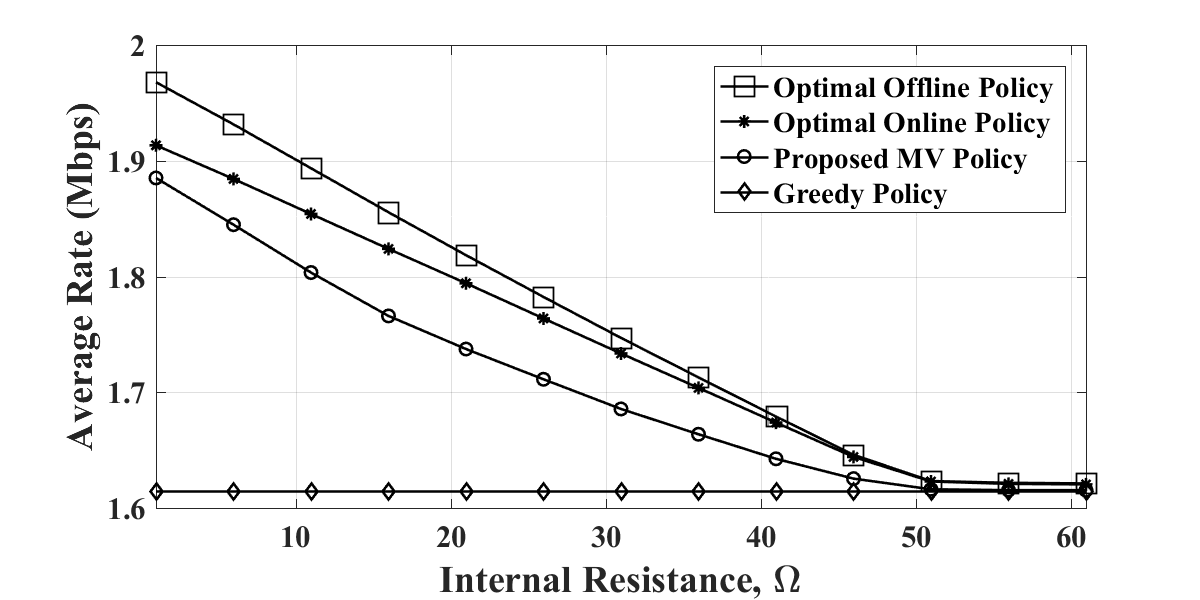}
		\caption{LTM strategy }
		\label{fig:ltm_var_r}
	\end{subfigure}%
\hspace{1cm}
	\begin{subfigure}{0.4\textwidth}
		\includegraphics[width=1\textwidth]{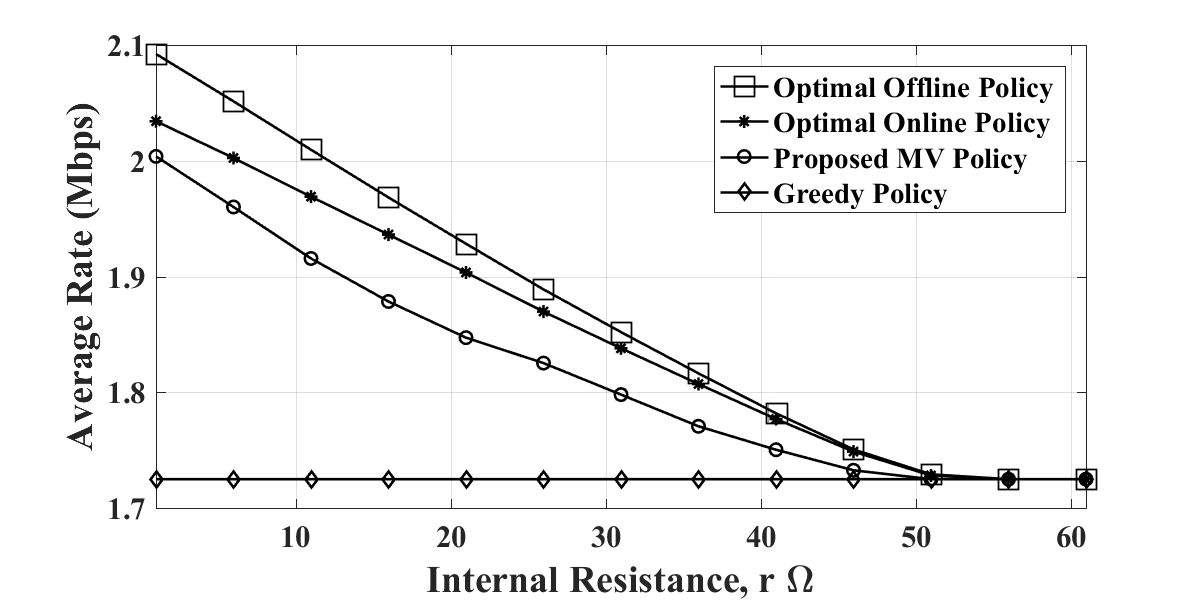}
		\caption{LSC strategy }
		\label{fig:lsc_var_r}
	\end{subfigure}
	\caption{Variation of the optimal rate with the internal resistance for $x=y=1$, $T=5$, $V_B=1.5$ \si{\volt}, $B_{\mathrm{max}}=\infty$, $P_C=10$ \si{\milli\watt} and $U$, uniformly distributed over $\{0,50,100\}\;\si{\milli\watt}$.}   
	\label{fig:var_r}
\end{figure*}

In Fig. \ref{fig:var_shape}, we present the variation of the average rate with the shaping parameter, $x$, with and without CSIT for a fixed mean value for $K=1$.  As $x$ increases, the channel becomes more deterministic, i.e., the probability of a particular channel realization dominates all others. From Fig. \ref{fig:ltm_var_shape} and Fig. \ref{fig:lsc_var_shape},  we note the average rates in both the cases increase with the shaping parameter $x$ and the performance without CSIT using the layered coding approaches the performance with CSIT. 

In Fig. \ref{fig:var_r}, we present the variation of the average rate for $K=50$ frames with the internal resistance for the offline and online policies for LTM and LSC strategies with $B_{\mathrm{max}}=\infty$. The offline optimal results are obtained by solving $\mathrm{P_{LTM}}$ in \eqref{eq:p_ltm} and $\mathrm{P_{LSC}}$ in \eqref{eq:p_lsc}. 
As expected, the average rate in LTM strategy is lower than the LSC strategy  always.  
In all the policies, except the Greedy policy, the average rate decreases with the internal resistance and meets the performance of the Greedy policy when the internal resistance is high. This is because the losses across the internal resistance prohibits  energy transfer across the frames. The average rate in the Greedy policy does not depend on the internal resistance because, in each of the frames,  it is optimal to not store energy in the battery due to battery losses.  Also, the proposed MV policy performs significantly better than the Greedy policy when the internal resistance is small.  

In Fig. \ref{fig:var_Bmax}, we present the variation of the average rate for $K=50$ frames with the battery capacity for the offline and online policies for LTM and LSC strategies. Since the offline optimization problems are non-convex with the finite capacity battery, we obtain the results using dynamic programming.  In all the policies,  except the Greedy policy, the average rate increases with the increasing capacity of the battery. After a certain value the battery capacity  the rate of increment of the average rate reduces significantly and reaches a plateau beyond which the battery capacity does not play any role. Note that in the Greedy policy, there is no change in the average rate with the battery capacity as the energy is not stored in the battery. 

\begin{figure*}[t]
	\centering
	\begin{subfigure}{0.4\textwidth}
		\includegraphics[width=1\textwidth]{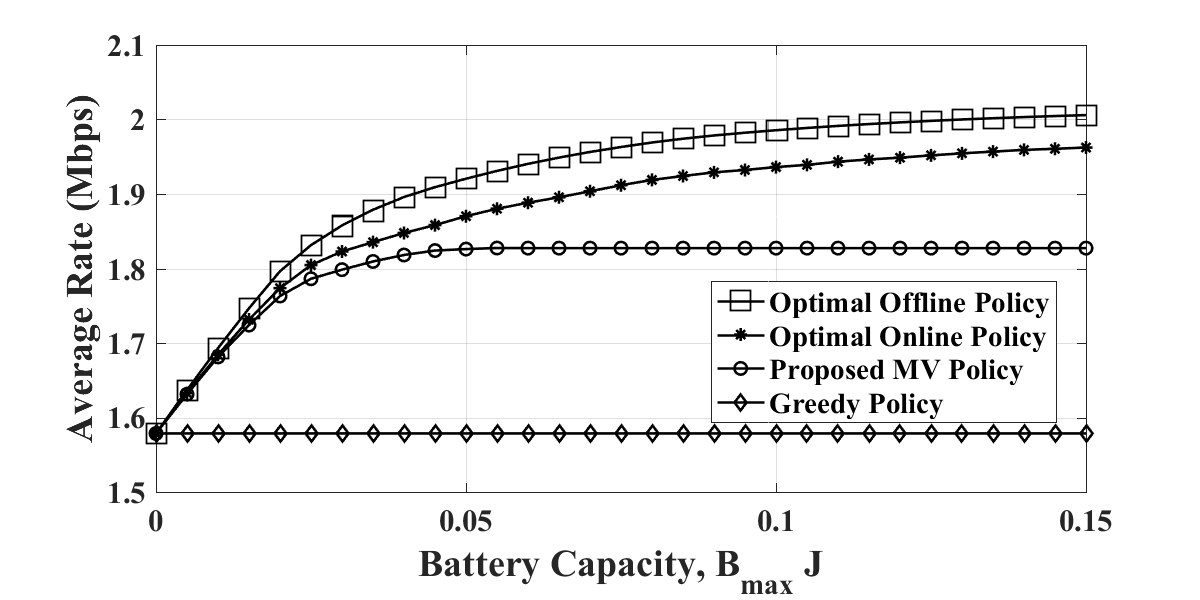}
		\caption{LTM strategy }
		\label{fig:ltm_var_Bmax}
	\end{subfigure}%
\hspace{1cm}
	\begin{subfigure}{0.4\textwidth}
		\includegraphics[width=1\textwidth]{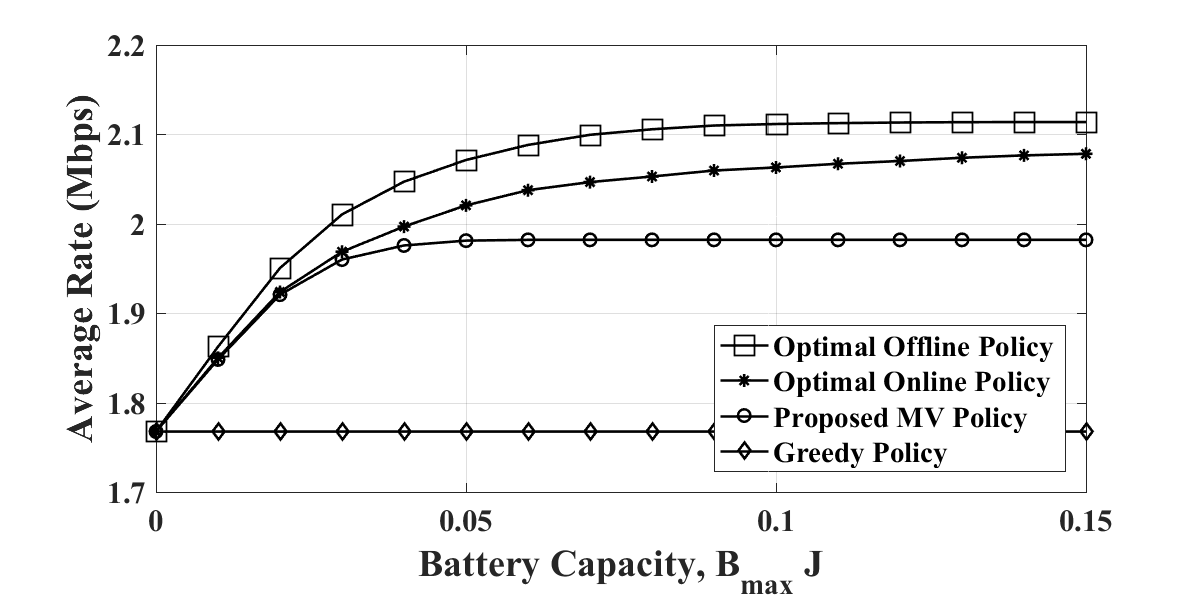}
		\caption{LSC strategy }
		\label{fig:lsc_var_Bmax}
	\end{subfigure}
	\caption{Variation of the optimal rate with the maximum capacity of the battery for $x=y=1$, $T=5$, $V_B=1.5$ \si{\volt}, $r=5$ \si{\ohm}, $P_C=10$ \si{\milli\watt} and $U$, uniformly distributed over $\{0,50,100\}\;\si{\milli\watt}$.}   
	\label{fig:var_Bmax}
\end{figure*}

\section{Conclusions}\label{sec:reflections}
In this paper,  we noted that it may be infeasible to acquire the current CSIT in EH  communication systems due to stringent constraints on resources. 
We optimized an EH transmitter communicating over a slow fading channel, which has access to the channel statistics, but does not know the exact channel state, under layered time-multiplexing and layered superposition coding strategies. In general, we have shown that the average rate maximization  problems are non-convex, and we reformulated and solved the problems  for the two frame case in the offline setting. We then proposed  heuristic online algorithms  based on the offline solutions and showed that the proposed algorithms perform significantly better than the greedy policies. For the superposition coding, we provided a simple and concise interpretation, referred to as layered water-filling algorithm, for the optimal solution in an ideal single frame case. By numerical simulations, we noted that the internal resistance significantly affects the system performance, and the optimal performance of the offline and online policies approach the performance of the greedy policy as the internal resistance increases. 

\section*{Appendix}
\subsection{The Lagrangian of  $\mathrm{P_{LTM}}$ in \eqref{eq:p_ltm} when $B_{\mathrm{max}}=\infty$ and Necessary Derivatives}
The Lagrangian of $\mathrm{P_{LTM}}$ in \eqref{eq:p_ltm} when $B_{\mathrm{max}}=\infty$ is given by
\begin{align}\label{eq:lagrange_LTM}
L_{\mathrm{LTM}}=& \sum_{k=1}^{K}\left(-\sum_{i=1}^{N}q_il_{i,k} \mathcal{G}\left(h_{i}\left(\frac{\beta_{i,k}U_k}{l_{i,k}}+\mathcal{F}_d\left(\frac{e_{i,k}}{l_{i,k}}\right)-P_C\right)\right)+ 
\mu_k\left(\sum_{i=1}^{N}l_{i,k}-\phi_k\right)\right)\nonumber\\
&+\sum_{k=1}^{K}\left(\sum_{m=1}^{N} \lambda_{m,k} \left(\sum_{i=1}^{m} \left(e_{i,k}-l_{i,k}\mathcal{F}_c(V_{i,k})\right)-(\tau-\phi_k) \mathcal{F}_c(V_{a_k}^*)\right)-\sum_{i=1}^{N}\nu_{l_{i,k}}l_{i,k}\right)+\nonumber\\
&\sum_{k=1}^{K}\sum_{m=1}^{N} \lambda_{m,k} \left(\sum_{j=1}^{k-1}\left(\sum_{i=1}^{N} \left(e_{i,j}-l_{i,j}\mathcal{F}_c(V_{i,j})\right)-(\tau-\phi_j) \mathcal{F}_c(V_{a_j}^*)\right)-B_0\right)+\nonumber\\
&\sum_{k=1}^{K}\left(\sum_{i=1}^{N}\omega_{\beta_{i,k}}(\beta_{i,k}-l_{i,k})-\sum_{i=1}^{N}\nu_{\beta_{i,k}}\beta_{i,k}
-\sum_{i=1}^{N}\nu_{e_{i,k}}e_{i,k}-\nu_{\phi_{k}}\phi_{k}+\omega_{\phi_{k}}(\phi_{k}-\tau)\right)
\end{align}
where $\lambda_{m,k}, \mu_k, \nu_{\beta_{i,k}}, \omega_{\beta_{i,k}},\nu_{e_{i,k}}, \nu_{l_{i,k}}, \nu_{\phi_{k}},\omega_{\phi_{k}}$ are non-negative Lagrange multipliers corresponding to \eqref{eq:p_ltm_c1}, and constraints in \eqref{eq:p_ltm_c2}, i.e., $\sum_{i=1}^{N}l_{i,k}-\phi_k\leq  0, \;  -\beta_{i,k}\leq 0,\; \beta_{i,k}\leq l_{i,k},\; -e_{i,k}\leq 0, -l_{i,k} \leq 0,\;-\phi_k\leq 0$ and $\phi_k\leq \tau$, respectively. 

Let $f'(x)=\partial f(x)/\partial x$.  Differentiating $\mathrm{L_{LTM}}$ in \eqref{eq:lagrange_LTM} with respect to $\beta_{i,k}, e_{i,k}, l_{i,k}, \phi_k$, we have,
\begin{align}
&\frac{\partial L_{\mathrm{LTM}}}{\partial \beta_{i,k}}=\frac{-h_iq_iU_k}{1+h_{i}P_{i,k}}+U_k\mathcal{F}_c'\left(V_{i,k}\right)\left(\sum_{j=k+1}^{K}\sum_{m=1}^{N}\lambda_{m,j}+\sum_{m=i}^{N}\lambda_{m,k}\right)-\nu_{\beta_{i,k}}+\omega_{\beta_{i,k}}=0 \label{eq:db_ltm}\\
&\frac{\partial L_{\mathrm{LTM}}}{\partial e_{i,k}}= \frac{-h_iq_i\mathcal{F}_d'\left(\frac{e_{i,k}}{l_{i,k}}\right)}{1+h_{i}P_{i,k}}+\left(\sum_{j=k+1}^{K}\sum_{m=1}^{N}\lambda_{m,j}+\sum_{m=i}^{N}\lambda_{m,k}\right)-\nu_{e_{i,k}}=0\label{eq:de_ltm}\\
&\frac{\partial L_{\mathrm{LTM}}}{\partial l_{i,k}}= \frac{h_iq_i\left(\frac{\beta_{i,k}U_k}{l_{i,k}}+\frac{e_{i,k}}{l_{i,k}}\mathcal{F}_d'\left(\frac{e_{i,k}}{l_{i,k}}\right)\right)}{1+h_{i}P_{i,k}}-q_i\log\left(1+h_iP_{i,k}\right)+\mu_k-\nu_{l_{i,k}}-\omega_{\beta_{i,k}}+\nonumber\\
&\qquad\qquad\left(-\frac{\beta_{i,k}U_k}{l_{i,k}}\mathcal{F}_c'\left(V_{i,k}\right)-\mathcal{F}_c\left(V_{i,k}\right)\right)\left(\sum_{j=k+1}^{K}\sum_{m=1}^{N}\lambda_{m,j}+\sum_{m=i}^{N}\lambda_{m,k}\right)\label{eq:dl_ltm}\\
&\frac{\partial L_{\mathrm{LTM}}}{\partial \phi_k}=-\mu_k+\mathcal{F}_c(V_{a_k}^*)\left(\sum_{j=k+1}^{K}\sum_{m=1}^{N}\lambda_{m,j}+\sum_{m=i}^{N}\lambda_{m,k}\right)-\nu_{\phi_{k}}+\omega_{\phi_{k}}=0\label{eq:dphi_ltm}
\end{align}

\subsection{Proof of Theorem \ref{lemma:TM_SF_ideal} }
In the ideal case, the harvested energy is not stored in the battery. Hence, \eqref{eq:p_ltm_c2} is inactive. We note that the battery can get exhausted in any of the partitions. 
We first consider the case when the battery is exhausted in partition $m$, the last partition in which the information is transmitted, i.e., when \eqref{eq:p_ltm_c1} is satisfied with equality only for $j=m$. In this case, $\lambda_{j,k}=0$ for $j=1,\ldots, m-1$ and $\lambda_{m,k}\geq 0$.  
For any layer $i\;(\leq m)$ in which the information is transmitted, we have, $P_{i,k}=\beta_{i,k}U_k/l_{i,k}+e_{i,k}/l_{i,k}>0$, $l_{i,k}>0$,  and either or both $\beta_{i,k}$ and $e_{i,k}$ must be non-zero. This implies, $\nu_{\beta_{i,k}}=\omega_{\beta_{i,k}}=\nu_{l_{i,k}}=0$, due to complementary slackness conditions. Consequently, whenever $P_{i,k}>0$, from \eqref{eq:db_ltm}, \eqref{eq:de_ltm} and \eqref{eq:dl_ltm}, we get, 
\begin{align}
\lambda_{m,k}&=\frac{h_iq_i}{1+h_iP_{i,k}}\implies P_{i,k}=\frac{q_i}{\lambda_{m,k}}-s_i, \;\; \forall\; i\in\mathcal{A}_T \label{eq:lam}\\
\mu_k&=-\frac{h_iq_i P_{i,k}}{1+h_iP_{i,k}}+q_i\log(1+h_iP_{i,k})+\lambda_{m,k}U_k , \;\; \forall\; i\in\mathcal{A}_T \label{eq:om}
\end{align}
where $s_i=1/h_i$ and $\mathcal{A}_T\subseteq\{1,\ldots, m\}$, is the set of layers in which the information is transmitted. 
By substituting the expression for $P_{i,k}$ from \eqref{eq:lam} in \eqref{eq:om}, we have, 
\begin{align}\label{eq:om_lam}
\mu_k=-q_i\log(\lambda_{m,k})+\lambda_{m,k} (s_i+U_k)-q_i-q_i \log\left(\frac{s_i}{q_i}\right), \;\; \forall\; i\in\mathcal{A}_T
\end{align}
Note that \eqref{eq:om_lam} is a system of $|\mathcal{A}_T|$ non-linear equations with two unknowns, $\lambda_{m,k}$ and $\mu_k$, where $|\mathcal{A}_T|$ is the cardinality of $\mathcal{A}_T$. When $|\mathcal{A}_T|\leq 2$, the solution to \eqref{eq:om_lam} can be easily found. However, when $|\mathcal{A}_T|> 2$, \eqref{eq:om_lam} is an overdetermined system of equations. In such cases, we now show that \eqref{eq:om_lam} is inconsistent. 
Let $\mathcal{A}_T=\{{i,j,m}\},\; i\neq j\neq m$ such that $s_i>s_j>s_m$, without loss of generality. Now, from \eqref{eq:om_lam}, for layers $i$, $j$ and $m$, we obtain the following equations.  
\begin{align}\label{eq:lambda}
 \log(\lambda_{m,k})-a_{ij}\lambda_{m,k}-b_{ij}= \log(\lambda_{m,k})-a_{im}\lambda_{m,k}-b_{im}=\log(\lambda_{m,k})-a_{jm}\lambda_{m,k}-b_{jm}=0
\end{align}
where $a_{vw}=(s_v-s_w)/(q_v-q_w)$ and $b_{vw}=-1-1/(q_v-q_w)\left(q_v\log(s_v/q_v)-q_w\log(s_w/q_w)\right)$ for $v=i,i,j$ and $w=j,m,m$, respectively. 
From the first two equations in \eqref{eq:lambda}, we have, $\tilde{\lambda}_{m,k}=({b_{im}-b_{ij}})/({a_{ij}-a_{im}})$ and, 
\begin{align}\label{eq:verify}
\log(\tilde{\lambda}_{m,k})=&\log\left(\frac{b_{im}-b_{ij}}{a_{ij}-a_{im}}\right) \stackrel{\text{(a)}}{<} \frac{b_{im}-b_{ij}}{a_{ij}-a_{im}}\stackrel{\text{(b)}}{<} \frac{b_{im}a_{ij}-b_{ij}a_{im}}{a_{im}-a_{im}}=\tilde{\lambda}_{m,k}a_{im}+b_{im}
\end{align}
where $(a)$ follows from the fact that $\log(x)\leq x-1<x,\;x>0$, (b) follows from the fact that whenever $\tilde{\lambda}_{m,k}>0$, $a_{ij}>a_{im}$ and $b_{ij}<b_{im}$, or  $a_{ij}<a_{im}$ and $b_{ij}>b_{im}$  holds. Hence, from \eqref{eq:verify}, we note that $\tilde{\lambda}_{m,k}$ is not a solution. Hence, the system of equations in \eqref{eq:lambda} is inconsistent. One can readily verify that \eqref{eq:om_lam} is inconsistent for any $|\mathcal{A}_T|> 3$. Hence, $|\mathcal{A}_T|\leq 2$, i.e., the number of layers in which the information is transmitted is at most two. Now, assuming that $\mathcal{A}_T=\{i,j\}$, the optimal $\lambda_{m,k}^*$ can be obtained from first equation in \eqref{eq:lambda} and $P_{i,k}^*$ can be obtained from  \eqref{eq:lam}. 
When $l_{i,k}^*\geq B_0/(P_{i,k}^*-U_k)$, the energy causality constraint gets violated in partition $i$. Hence, we consider the following case.

Next, we now assume that the energy stored in the battery is exhausted in $i$th partition, where $i$ is the second last partition in which the information is transmitted. 
The last partition uses only the power from the EH source. When only the EH power is used, one can prove that the optimal performance can be obtained by transmitting in only one partition. Hence, the information can be transmitted in partition $j\; (>i)$ only, in addition to partitions $1,\ldots,i$. In this case, we have, $\lambda_{1,k}=\ldots=\lambda_{i-1,k}=0, \lambda_{i,k}\geq 0$, $\lambda_{j,k}\geq 0$, $\beta_{j,k}=l_{j,k}>0$ and $e_{j,k}=0$. Substituting the values in \eqref{eq:lagrange_LTM} and differentiating with respect to $l_{j,k}$, we get, $\mu_k=q_j\log\left(1+h_jU_k\right)$.  Substituting $\mu_k$ in \eqref{eq:om_lam}, we note that the information can be transmitted only in partition $i$ among the initial $i$ partitions. Hence, $e_{i,k}=B_0$ and  $\nu_{e_{i,k}}=0$. Further, $\beta_{i,k}>0$ as charging and discharging the battery simultaneously is sub-optimal (See Lemma \ref{lemma:c-d}). From \eqref{eq:db_ltm} and \eqref{eq:de_ltm}, $\nu_{\beta_{i,k}}=\omega_{\beta_{i,k}}=0$ and $P_{i,k}=B_0/l_{i,k}+U_k$. From, \eqref{eq:db_ltm} and \eqref{eq:dl_ltm} and substituting  $\mu_k=q_j\log\left(1+h_jU_k\right)$, we obtain,  
\begin{align}\label{eq:mul}
\frac{q_ih_iP_{i,k}}{1+h_{i}P_{i,k}}  -q_i\log\left(1+h_iP_{i,k}\right)+q_j\log\left(1+h_jU_k\right)-\frac{q_ih_iU_k}{1+h_{i}P_{i,k}}=0
\end{align}
We can now solve for $l_{i,k}$ from \eqref{eq:mul}. Let $\mathcal{L}$ be the set of $l_{i,k}$'s that satisfy \eqref{eq:mul}. Then, the optimal, 
\begin{align}\label{eq:soll}
l_{i,k}^*=\argmax_{l_{i,k}\in \mathcal{L}} \left(  q_il_{i,k}\log(1+h_iP_{i,k})+q_j(\tau-l_{i,k})\log(1+h_jU_k)\right)
\end{align}
Now, one can easily obtain optimal $P_{i,k}^*$ and $l_{j,k}^*$ from $l_{i,k}^*$.

\subsection{Proof of Theorem \ref{lemma:LTM_SF_non_ideal} }
\paragraph{When $0<\phi_k^*<\tau$}
We first assume that $B_{\mathrm{max}}=\infty$. In this case, $\nu_{\phi_k}=\omega_{\phi_k}=0$, due to complementary slackness conditions. From \eqref{eq:dphi_ltm}, we have, $\mu_k=\mathcal{F}_c(V_{a_k}^*)\lambda_{i,k}$, where $i$ is the layer in which the battery energy is exhausted. If energy is allocated to any layer $i$ from the battery, i.e., $e_{i,k}>0$, then, $\nu_{e_{i,k}}=0$ and as charging and discharging the battery simultaneously is sub-optimal, we must have $\beta_{i,k}=l_{i,k}>0$ implying that $\nu_{\beta_{i,k}}=\nu_{l_{i,k}}=0$. Substituting $\omega_{\beta_{i,k}}$ from \eqref{eq:db_ltm}, $\lambda_{i,k}$ from \eqref{eq:de_ltm} and $\mu_k$ from \eqref{eq:dphi_ltm} in \eqref{eq:dl_ltm} and simplifying,
\begin{align}\label{eq:Pcphi}
h_i\left( x_{i,k}+\mathcal{F}_c(V_{a_k}^*)\right)\mathcal{F}_d'\left(x_{i,k}\right)-  \left(1+h_{i}P_{i,k} \right)\log\left(1+h_iP_{i,k}\right)=0
\end{align}
where $x_{i,k}=e_{i,k}/l_{i,k}$ and $P_{i,k}=U_k+\mathcal{F}_d(x_{i,k})-P_C$. 
Clearly, \eqref{eq:Pcphi} is inconsistent if $\mathcal{A}_T=\{j_1,j_2,\ldots,j_n\},\; j_1\neq j_2\neq\ldots\neq j_n, 2\leq n\leq N$,  where $\mathcal{A}_T$ is the set of layers in which the information is transmitted. Hence, we conclude that in the optimal case, $|\mathcal{A}_T|=1$. Let $\mathcal{X}_i$ be the set of solutions to \eqref{eq:Pcphi} when $\mathcal{A}_T=i$.  Noting that $e_{i,k}=B_0+(\tau-l_{i,k})\mathcal{F}_c(V_{a_k}^*)$, we have, $\phi_k^*=l_{i,k}^*=\left({B_0+\tau \mathcal{F}_c(V_{a_k}^*)}\right)/\left({\tilde{x}_{i,k}+\mathcal{F}_c(V_{a_k}^*)}\right)$, where $\tilde{x}_{i,k}=\min{(\mathcal{X}_i)}$. 
Then, we choose  $i$ that maximizes $q_il_{i,k}\log(1+h_i(U_k-P_C+\mathcal{F}_d(\tilde{x}_{i,k})))$.  When $B_{\mathrm{max}}\neq \infty$, we have, $e_{i,k}=\min\left(B_0+(\tau-l_{i,k})\mathcal{F}_c(V_{a_k}^*),B_{\mathrm{max}}\right)$. 
Whenever $\phi_k^*\geq \tau$ in the above computation, it violates the frame length constraint and $\phi_k^*=\tau$ in the optimal case.  

\paragraph{When $\phi_k^*=\tau$}
 We have, $\nu_{\phi_k}=0$, $\omega_{\phi_k}\geq0$. We first assume that the energy is exhausted in  the last partition in which the transmission takes place. As in Appendix B,
\begin{align}\label{eq:phi1}
\omega_{\phi_k}=q_i\log\left(1+h_iP_{i,k}\right)-\frac{h_iq_i\left( g_{i,k}(\lambda_{k})+\mathcal{F}_c(V_{a_k}^*)\right)\mathcal{F}_d'\left(g_{i,k}(\lambda_{m,k})\right)}{\left(1+h_{i}P_{i,k} \right)}, \; \forall \; i\in \mathcal{A}_T
\end{align}
where $e_{i,k}/l_{i,k}=g_{i,k}(\lambda_{m,k})$ based on \eqref{eq:db_ltm} and $P_{i,k}=\mathcal{F}_d(g_{i,k}(\lambda_{m,k}))+U_k-P_C$, where $m$ is the last layer in which the battery is exhausted. It can be seen that when it is optimal to transmit in layers $i$ and $j$, \eqref{eq:phi1} is a system of equations with two variables and we can solve for unique $\omega_{\phi_k}$ and $\lambda_{m,k}$. From $\lambda_{m,k}$, we can find $e_{n,k}/l_{n,k}$ and $P_{n,k}$ for $n=i,j$.  

We now assume that the battery is exhausted in $i$th partition, where $i$ is the second last partition in which the information is transmitted. 
As in Appendix B, we have, 
\begin{align}\label{eq:lamPC}
\frac{h_iq_i\left( \frac{e_{i,k}}{l_{i,k}}+\mathcal{F}_c(V_{a_k}^*)\right)\mathcal{F}_d'\left(\frac{e_{i,k}}{l_{i,k}}\right)}{1+h_{i}P_{i,k}}-q_i\log\left(1+h_iP_{i,k}\right)+q_j\log\left(1+h_j(U_k-P_C)\right)=0
\end{align}
We now obtain the optimal $l_{i,k}^*$ from \eqref{eq:soll}, where the maximization is carried over the set of $l_{i,k}$'s that satisfy \eqref{eq:lamPC}. 

\subsection{The Lagrangian of  $\mathrm{P_{LSC}}$ in \eqref{eq:p_lsc}  when $B_{\mathrm{max}}=\infty$ and Necessary Derivatives}
The Lagrangian of $\mathrm{P_{LSC}}$ in \eqref{eq:p_lsc}  when $B_{\mathrm{max}}=\infty$, 
\begin{align}\label{eq:lagrange_LSC}
L_{LSC}=&\sum_{k=1}^{K}\lambda_k \left( \phi_k \left(\sum_{i=1}^{N}s_i\mathcal{G}^{-1}(R_{i,k}/\phi_k)\prod_{l=1}^{i-1} \left(\mathcal{G}^{-1} \left(R_{l,k}/\phi_k\right)+1\right)+P_C-\frac{\beta_kU_k}{\phi_k}- \mathcal{F}_d\left(\frac{e_k}{\phi_k}\right) \right)\right)\nonumber\\
&+\sum_{k=1}^{K}\left(-\sum_{i=1}^{N}q_iR_{i,k}-\sum_{i=1}^{N}\mu_{i,k} R_{i,k} -\nu_{\phi_k} \phi_k +\omega_{\phi_k} (\phi_k-\tau)-\nu_{\beta_{k}}\beta_k+\omega_{\beta_{k}}(\beta_k-\phi_k)\right)\nonumber\\
&+\sum_{k=1}^{K}\psi_k\left(\sum_{j=1}^{k}\left(e_j-\phi_j\mathcal{F}_c(V_j)-(\tau-\phi_j) \mathcal{F}_c(V_{a_j}^*)\right)-B_0\right)-\sum_{k=1}^{K}\nu_{e_k}e_{k}
\end{align}
where $\lambda_k$'s and $\psi_k$'s are the non-negative Lagrange multipliers corresponding to \eqref{eq:p_lsc_c1} and \eqref{eq:p_lsc_c2}, respectively. $\mu_{i,k}$, $\nu_{\phi_k}$, $\omega_{\phi_k}$, $\nu_{\beta_k}$, $\omega_{\beta_k}$ and $\nu_{e_k}$ are non-negative Lagrange multipliers corresponding to inequalities $R_{i,k} \geq 0$, $\phi_k \geq 0$, $\phi_k-\tau\leq 0$, $\beta_k\geq 0$, $\beta_k-\phi_k\leq 0$ and $e_k\geq 0$, respectively, for each $k= 1,\ldots,K$ and $i=1,\ldots,N$.   

Differentiating $\mathrm{L_{LSC}}$ with  in \eqref{eq:lagrange_LSC} with respect to $R_{i,k}$, $e_k$, $\phi_k$ and $\beta_k$,
\begin{align}
&\frac{\partial L_{LSC}}{\partial R_{i,k}}=-q_i+\lambda_k \left(\sum_{l=i}^{N}\left(s_l-s_{l+1}\right)\exp(\frac{\sum_{j=1}^{l}R_{j,k}}{\phi_k})\right)-\mu_{i,k}=0 \label{eq:dR_lsc_sf}\\
&\frac{\partial L_{LSC}}{\partial e_k}=-\lambda_k\mathcal{F}_d'\left(\frac{e_k}{\phi_k}\right)-\nu_{e_k}+\sum_{j=k}^{K}\psi_k=0 \label{eq:de_lsc_sf}\\
&\frac{\partial L_{LSC}}{\partial \phi_k}=\lambda_k \left(\left(\sum_{i=1}^{N}\left(  \left(s_i-s_{i+1}\right)\exp(\frac{\sum_{j=1}^{i}R_{j,k}}{\phi_k})\right) \left(1-\frac{\sum_{j=1}^{i}R_{j,k}}{\phi_k}\right)\right)  -s_1+P_C\right)\nonumber\\
&\qquad\quad+\lambda_k\left(\frac{e_k}{\phi_k}\mathcal{F}_d'\left(\frac{e_k}{\phi_k}\right)-\mathcal{F}_d\left(\frac{e_k}{\phi_k}\right)\right)+\left(\frac{-\beta_kU_k\mathcal{F}_c'(V_k)}{\phi_k}-\mathcal{F}_c(V_k)+ \mathcal{F}_c(V_{a_k}^*)\right)\sum_{j=k}^{K}\psi_k \nonumber\\
&\qquad\quad-\nu_{\phi_k}+\omega_{\phi_k}-\omega_{\beta_{k}}=0 \label{eq:dphi_lsc_sf}\\
&\frac{\partial L_{LSC}}{\partial \beta_k}=-\lambda_kU_k+U_k\mathcal{F}_c'(V_k) \sum_{j=k}^{K}\psi_k-\nu_{\beta_{k}}+\omega_{\beta_{k}}=0 \label{eq:db_lsc_sf}
\end{align}

\subsection{Proof of Theorem \ref{thm:LSC_SF_ideal} }
First we prove that if $R_{a_{m},k}>0$ (equivalently $P_{a_{m},k}>0$ and $\mu_{a_m,k}=0$), then $R_{a_i,k}>0$ for all $i\geq m$  in the optimal solution.  This means that if a layer is allocated energy, then all the higher layers are also allocated energy. 
Assume that $R_{a_{m},k}>0$ which also implies $P_{a_{m},k}>0$ and $\mu_{a_m,k}=0$. We prove the result by contradiction.
Assume that $R_{a_i,k}=0,\; \forall i > m$.  Due to complementary slackness condition, we have, $\mu_{a_i,k}>0,\; \forall i > m$. From \eqref{eq:rate} and \eqref{eq:active_layers},
\begin{align}
\frac{-\mu_{a_{m+1},k}}{s_{a_m}-s_{a_{m+1}}} > \frac{\mu_{a_{m+1},k}-\mu_{a_{m+2},k}}{s_{a_{m+1}}-s_{a_{m+2}}}> \ldots >     \frac{\mu_{a_{A},k}}{s_{a_{A}}}
\end{align} 
Considering the first and the last terms, we can see that $\mu_{a_{A},k} < -\mu_{a_{m+1},k}s_{a_{A}}/(s_{a_m}-s_{a_{m+1}})<0$ which contradicts our assumption that $\mu_{a_{A},k}>0$.
Hence, we cannot have $R_{a_A,k}=0$. Similarly, we can consider other pairs and show that $R_{a_i,k}>0,\; \forall i > m$. 
Now, we discuss the optimal power allocation.   
Assuming that the power is allocated starting from frame $a_m, m\geq 1$, then we have  $\mu_{a_i,k}=0$ for all $i\geq m$.  From \eqref{eq:rate}, we have,
\begin{align}\label{eq:la}
 \lambda_k \exp(\sum_{j=a_m}^{a_i}R_{j,k})=\frac{\tilde{p}_{a_i}}{s_{a_i}-s_{a_{i+1}}} \quad  i \geq m
\end{align}
For any frame $k \;(\geq m+1)$, from \eqref{eq:la}, we now evaluate $\lambda_k$ as
\begin{align}
\lambda_k=\frac{\tilde{p}_{a_k}}{(s_{a_k}-s_{a_{k+1}})\exp(\sum_{j=a_m}^{a_k}R_{j,k})}
\end{align}
Substituting the result in  \eqref{eq:la} for $k+1$, we obtain \eqref{eq:pmax}.   
 Now, we can easily compute the transmit powers $\tilde{P}_{a_{k}},\;\forall k \geq m+1$.  So far, none of the terms consider the total power available, $P_k$. Even if $P_k$ is infinite, the above solution suggests that only $\tilde{P}_{a_{k}}$ units are allocated for layer $a_k$.  Hence, $\tilde{P}_{a_{k}}$'s can be treated as the maximum power allocated to any frame  $a_k$ and 
we have, $	P^{\mathrm{max}}_{a_k} = \tilde{P}_{a_{k}} $. After, thus allocating the power to all the frames $k\geq m+1$, we allocate the remaining power to frame $a_m$. Since allocating the power to lower layers implies that it must be allocated to the higher layers as well, we allocate the power starting from the highest layer, $a_A$.

\subsection{Proof of Theorem \ref{thm:LSC_SF_non_ideal} }
\begin{itemize}[leftmargin=*]
	\item If  $\phi_k^*=0$, clearly, $R_{i,k}^*=P_{i,k}^*=0$,  for $i=1,\ldots, N$.  
	\item If $0<\phi_k^*<\tau$, due to complementary slackness condition, we have, $\nu_{\phi_k}=\omega_{\phi_k}=0$. Since the amount of energy drawn from the battery is non-zero, we have $e_k>0$ implying $\nu_{e_k}=0$. Since charging and discharging the battery simultaneously is sub-optimal (See Lemma \ref{lemma:c-d}), we have $\alpha_{b_k}^*=1$ implying that $\beta_k^*=\phi_k^*$ and $\omega_{\beta_k}\geq 0$. Expressing $\psi_k$ and $\omega_{\beta_k}$ in terms of $\lambda_k$ from \eqref{eq:de_lsc_sf} and \eqref{eq:db_lsc_sf}, respectively, and substituting them in \eqref{eq:dphi_lsc_sf}, we get,
	\begin{align}\label{eq:phi_lsc_sf}
	&\sum_{i=1}^{N}\left( \left(s_i-s_{i+1}\right)\exp(\frac{\sum_{j=1}^{i}R_{j,k}^*}{\phi_k^*})\right) \left(\frac{\sum_{j=1}^{i}R_{j,k}^*}{\phi_k^*}\right)\nonumber\\ &=\sum_{i=1}^{N}P_{i,k}+P_C - U_k-\mathcal{F}_d\left(\frac{e_k}{\phi_k}\right)+\mathcal{F}_d'\left(\frac{e_k}{\phi_k}\right)\left(\frac{e_k}{\phi_k}+\mathcal{F}_c(V_{a_k}^*)\right)
	\end{align}
	where we note  $\lambda_k>0$. 
	Since $\Sigma_{j=1}^{i}R_{j,k}^*/\phi_k^*= \mathcal{G}\left(({h_iP_{i,k}^*})/({1+h_i\sum_{j=i+1}^{N}P_{j,k}^*})\right)$ 
	is independent of $\phi_k^*$, the left-hand side of  \eqref{eq:phi_lsc_sf} is a function of $P_{i,k}^*$'s and $s_i$'s only. Hence, we use the following technique to obtain the solution: fix $B_0=P_C\tau$ and $U_k=0$. In this case, the optimal solution must have $0<\phi^*<\tau$ as the information cannot be transmitted in other cases. Let $\tilde{P}_{i,0}, \; i=1,\ldots,N$, be the optimal power allocation in this case. Now, from \eqref{eq:phi_lsc_sf}, we know that the optimal solution does not depend on $\phi_k^*$. Hence, for any value of $B_0$ and $U_k$, we can fix $\tilde{P}_{i,0}$'s as the optimal solution and vary $\phi_k^*$ such that the total power delivered is sufficient to run the circuitry and transmit the information, as follows. Let the transmission occur within $[\tau-\phi_k, \tau]$. 
	The total amount of energy stored in the battery during the non-transmission phase  is $e_k=B_0+(\tau-\phi_k)\mathcal{F}_c(V_{a_k}^*)$. Hence, the  power available for the transmission is $\mathcal{F}_d(e_k/\phi_k)+U_k$. To transmit the information in the optimal rate, we must have $\mathcal{F}_d(e_k/\phi_k)+U_k=\sum_{i=1}^{N}\tilde{P}_{i,k}+P_C$. The optimal $\phi_k^*$ is maximum $\phi_k$ that solves $\mathcal{F}_d(e_k/\phi_k)+U_k=\sum_{i=1}^{N}\tilde{P}_{i,k}+P_C$. When $B_{\mathrm{max}}\neq \infty$, we have, $e_k=\min\left(B_0+(\tau-\phi_k)\mathcal{F}_c(V_{a_k}^*),B_{\mathrm{max}}\right)$. 
	\item When $\phi_k^*=\tau$, the solution is similar to that in the ideal case with the available power equal to $U_k+\mathcal{F}_d\left(B_0/\tau\right)-P_C$.  Hence, the optimal solution is given by Theorem \eqref{thm:LSC_SF_ideal}. 
\end{itemize}

\subsection{Proof of Theorem \ref{thm:LSC_MF_non_ideal} }
We consider \eqref{eq:dR_lsc_sf}~--~\eqref{eq:db_lsc_sf} with $K=2$.  We now consider various cases on $\phi_k^*, \; k=1,2$. 
\subsubsection{$\phi_1^*=\phi_2^*=0$} In this case, $R_{i,k}^*=P_{i,k}^*=0$ for $i=1,\ldots, N$ and $k=1,\ldots, K$.
\subsubsection{$0<\phi_1^*,\phi_2^*<\tau$}
In this case, due to complementary slackness condition, we have $\nu_{\phi_k}=\omega_{\phi_k}=0$ for $k=1,2$. 
As in the single frame case, we have, $e_k>0, \alpha_{b_k}^*=1$, $\beta_k^*=\phi_k^*$  and  $\nu_{\beta_k}=0, \; k=1,2$.  From  \eqref{eq:de_lsc_sf}~--~\eqref{eq:db_lsc_sf}, we get equations with $\phi_k$ and $e_k$ as variable as in \eqref{eq:phi_lsc_sf} for $k=1,2$. The equations may have more than one solutions. However, the equations can be solved independently for $k=1,2$, along the lines of the proof in the single frame case, i.e., compute $\tilde{P}_{i,0}$ with $B_0=P_C\tau$ and assign $\tilde{P}_{i,1}=\tilde{P}_{i,2}=\tilde{P}_{i,0}$ for $i=1,\ldots,N$. We then select the optimal $\phi_k, \; k=1,2$ from Theorem \ref{thm:LSC_SF_non_ideal}.  
\subsubsection{$\phi_1=\tau,0<\phi_2^*<\tau$}
Clearly, for the second frame, $P^*_{i,2}=\tilde{P}_{i,0}$ for $i=1,\ldots,N$.  We now find the optimal $P_{i,1}$'s, $\phi_2^*$ and $e_k^*, \;k=1,2$ in the following by considering different cases.  
\paragraph{Case A}
We assume that the battery energy is used by both the frames, i.e., \eqref{eq:p_lsc_c2} is not satisfied with equality for $k=1$. 
Hence, $\beta_1=\tau$, $\beta_2=\phi_2$. Due to complementary slackness conditions, we have, $\omega_{\beta_k}\geq 0, \; k=1,2$, $\psi_1=\nu_{e_1}=\nu_{e_2}=0$. From \eqref{eq:dR_lsc_sf} and  \eqref{eq:de_lsc_sf}, we have, 
\begin{align}\label{eq:lam2}
&\psi_2=\lambda_1\mathcal{F}_d'\left(\frac{e_1}{\tau}\right)=\lambda_2\mathcal{F}_d'\left(\frac{e_2}{\phi_2}\right), \;\; q_N=\lambda_1s_K\exp(\frac{\sum_{j=1}^{N}R_{j,1}}{\phi_1})=\lambda_2s_K\exp(\frac{\sum_{j=1}^{N}R_{j,2}}{\phi_2})
\end{align}
Assuming $d_2=e_2/\phi_2$, from  \eqref{eq:lam2}, we have, 
\begin{align}\label{eq:d2e1}
{\mathcal{F}_d'\left(d_2\right)}{\exp(({\sum_{j=1}^{N}R_{j,1}})/{\tau})}={\mathcal{F}_d'\left({e_1}/{\tau}\right)}{\exp(({\sum_{j=1}^{N}R_{j,2}})/{\phi_2})} 
\end{align}
 In the second frame,  the total power required during the transmission is $\sum_{i=1}^{N}\tilde{P}_{i,0}+P_C$. Hence, we must have,  $\mathcal{F}_d(d_2)+U_2=\sum_{i=1}^{N}\tilde{P}_{i,0}+P_C$. Substituting the value of $d_2$ from this equation in \eqref{eq:d2e1}, we can solve for the unique $e_1^*$ based on Theorem \ref{thm:LSC_SF_ideal} within $N$ iterations subject to the battery capacity constraint.  
The amount of energy stored in the battery at the end of the first frame is $B_1=B_0-e_1$. Now, the optimal $\phi_2^*$ is the maximum $\phi_2$ that solves the following equation.  
\begin{align}\label{eq:phi2}
\mathcal{F}_d\left(\frac{\min\left(B_0-e_1+(\tau-\phi_2)\mathcal{F}_c(V_{a_2}^*),B_{\mathrm{max}}\right)}{\phi_2}\right)-P_C-\sum_{i=1}^{N}\tilde{P}_{i,0}=0
\end{align} 
\paragraph{Case B}
In the Case A, if $e_1<0$,  it is not optimal to allocate battery energy in the first frame. However,  it may be optimal to charge the battery in the first frame and transfer energy to the second frame, i.e., $e_1=0$ and $0<\beta_1\leq \tau$. Hence, due to complementary slackness condition, $\nu_{e_1}\geq 0$  and $\nu_{\beta_1}=\omega_{\beta_1}=0$.
  From \eqref{eq:de_lsc_sf} and \eqref{eq:db_lsc_sf}, we have,
\begin{align}\label{eq:lam1_beta1}
\lambda_1=\mathcal{F}_c'(V_1)\psi_2,\;\; \psi_2=\lambda_2\mathcal{F}_d'\left(\frac{e_2}{\phi_2}\right)
\end{align}
Recall that $V_k=(1-\beta_k/\phi_k)U_k$. 
From \eqref{eq:lam2} and \eqref{eq:lam1_beta1}, we have,
\begin{align}\label{eq:case3}
{\exp(\frac{\sum_{j=1}^{N}R_{j,2}}{\phi_2})}={\mathcal{F}_d'\left(d_2\right)}{\mathcal{F}_c'(V_1)\exp(\frac{\sum_{j=1}^{N}R_{j,1}}{\phi_1})}
\end{align}
We can obtain the unique $\beta_1$ by solving \eqref{eq:case3}  based on Theorem \ref{thm:LSC_SF_ideal} within $N$ iterations subject to battery capacity constraint.   
In this case, $\beta_1U_1/\tau$ \si{\watt} is used for transmission in the first frame. At the beginning of the second frame, we have $B_0+\mathcal{F}_c(V_1)\tau$ \si{\joule} in the battery. 
\subsubsection{$\phi_1^*=\phi_2^*=\tau$}
 In this case, the harvested energy is transferred from the first frame to the second frame. Hence, $e_1=0$, $0<\beta_1<\tau$, $e_2>0$ and $\beta_2=0$. Hence, $\nu_{e_1}$ may not be zero, $\nu_{\beta_1}=\omega_{\beta_1}=\nu_{e_1}=0$.  Hence, the solution can be obtained from \eqref{eq:case3} by substituting $d_2=(B_0+\mathcal{F}_c(V_1)\tau)/\tau$ and solving for the unique $\beta_1$. 
 
\bibliographystyle{ieeetran}
\bibliography{IEEEabrv,twireless}

\end{document}